\def\nuenc{\nu_\mathsf{fresh}}
\def\nums{\nu_\mathsf{ms}}
\def\numul{\nu_\mathsf{mul}}
\def\U{\mathcal{U}}
\def\V{\mathsf{Var}}
\def\Z{\mathbb{Z}}
\def\E{\mathbb{E}}
\def\Vms{V_{\mathsf{ms}}}
\DeclareMathOperator{\Var}{Var}
\DeclareMathOperator{\Cov}{Cov}
\DeclareMathOperator{\rot}{rot}
\def\nuadd{\nu_\mathsf{add}}
\def\nuenc{\nu_\mathsf{clean}}
\def\numul{\nu_\mathsf{mul}}
\def\numulconst{\nu_\mathsf{const}}
\def\X{\chi}
\def\R{\mathcal{R}}
\def\U{\mathcal{U}}
\def\Z{\mathbb{Z}}
\def\fc{\mathfrak{c}}   
\def\ks{\mathsf{ks}}
\def\pk{\mathsf{pk}}
\def\sk{\mathsf{sk}}
\def\vc{\mathbf{c}}
\def\vd{\mathbf{d}}
\def\fd{\mathfrak{d}}
\def\ncan#1{||#1||^{can}}
\def\Cov{\mathsf{Cov}}
\def\E{\mathbb{E}}
\def\V{\mathsf{Var}}
\def\ms{\mathsf{ms}}
\definecolor{rubblue}{RGB}{0, 53, 96}
\definecolor{rubgreen}{RGB}{141, 174, 16}
\definecolor{rubgray}{RGB}{231, 231, 231}
\definecolor{rubyellow}{RGB}{255, 204, 0}
\definecolor{mygreen}{rgb}{0.400,0.670,0.620} 
\definecolor{mygray}{gray}{0.7}
\definecolor{TIIyellow}{RGB}{255,174,66}
\def\bcB{\color{blue}}
\def\bcR{\color{red}}
\def\ec{\color{black}}
\definecolor{crimson}{rgb}{0.86, 0.08, 0.24}
\def\check#1{{\color{blue} CHECK: #1}}
\tikzset{add/.style={draw, rubblue, thick, node contents={$+$}, minimum width=5, minimum height=5}}
\tikzset{ciphertext/.style={node contents={$\mathbf{c}_{#1}$}}, ciphertext/.default={}}
\tikzset{const/.style={draw, rubblue, thick, node contents={$\scriptstyle\alpha_{#1}$}, minimum width=5, minimum height=5}}
\tikzset{mul/.style={decorate, decoration={brace, amplitude=2mm}}}
\tikzset{rot/.style={draw, rubblue, thick, node contents={$\scriptstyle\rot$}, minimum width=5, minimum height=5}}
\tikzset{tau/.style={decorate, decoration={brace, amplitude=2mm}}}
\newtheorem{assumption}{Heuristic}
\newtheorem{obs}{Remark}
\lstdefinestyle{rub}{
    basicstyle=\ttfamily\footnotesize,
    captionpos=b,
    commentstyle=\color{rubblue},
    frame=lines,
    keywordstyle=\color{rubblue}\bfseries,
    rulecolor=\color{rubblue}}
\newtcolorbox{algobox}[2][]{
    enhanced,
    sharp corners,
    boxrule=1pt,
    colback=white,
    colframe=rubblue,
    coltitle=rubblue,
    attach boxed title to top left={
        xshift=2mm,
        yshift=-2.5mm},
    boxed title style={
        tile,
        size=minimal,
        colback=white,
        left=1mm, right=1mm,
        before upper=\strut},
    title=#2,
    #1}
\author{%
    Beatrice Biasioli\inst{1} \and%
    Chiara~Marcolla\inst{2} \and%
    Nadir~Murru\inst{3} \and%
    Matilda~Urani\inst{4}
    }
\institute{%
    IBM Research Europe - Zurich, Switzerland \& University of Potsdam, Germany \footnote{Part of this work was performed while at Technology Innovation Institute, Abu Dhabi}
    \and%
    Technology Innovation Institute, Abu Dhabi, United Arab Emirates
    \and%
    Università degli studi di Trento, Trento, Italy
    \and%
    Politecnico di Torino, Torino, Italy}
\title{Accurate BGV Parameters Selection: Accounting for Secret and Public Key Dependencies in Average-Case Analysis}
\titlerunning{Accurate BGV Parameters Selection}
\begin{document}
\maketitle
\begin{abstract}
The Brakerski-Gentry-Vaikuntanathan (BGV) scheme is one of the most significant fully homomorphic encryption (FHE) schemes. \\ 
It belongs to a class of FHE schemes whose security is based on the presumed intractability of the Learning with Errors (LWE) problem and its ring variant (RLWE). 
Such schemes deal with a quantity, called \textit{noise}, which increases each time a homomorphic operation is performed. 
Specifically, in order for the scheme to work properly, it is essential that the noise remains below a certain threshold throughout the process.  
For BGV, this threshold strictly depends on the ciphertext modulus, which is one of the initial parameters whose selection heavily affects both the efficiency and security of the scheme. \\
For an optimal parameter choice, it is crucial to accurately estimate the noise growth, particularly that arising from multiplication, which is the most complex operation.
In this work, we propose a novel \textit{average-case} approach that precisely models noise evolution and guides the selection of initial parameters, improving efficiency while ensuring security.
The key innovation of our method lies in accounting for the dependencies among ciphertext errors generated with the same key, and in providing general guidelines for accurate parameter selection that are library-independent.

\end{abstract}

\keywords{Average-case; BGV; Fully Homomorphic Encryption; OpenFHE; Parameters selection}
\section{Introduction}
\label{sec:BGVintroduction}
The first Fully Homomorphic Encryption (FHE) scheme was introduced in 2009 by Gentry \cite{gentry2009fully}. 
Since then, several FHE constructions have been proposed, such as BGV \cite{brakerski2014leveled}, BFV \cite{brakerski2012fully,fan2012somewhat}, FHEW \cite{ducas2015fhew}, TFHE \cite{chillotti2016faster,chillotti2020tfhe}, and CKKS \cite{cheon2017homomorphic,cheonRNSHEAAN18}.

The homomorphic encryption schemes currently in use base their security on the presumed intractability of the Learning with Errors (LWE) problem \cite{regev2005lattices}, and its ring variant (RLWE) \cite{lyubashevsky2010ideal}. Informally, the decisional version of RLWE consists of distinguishing polynomial equations 
$(a, b = s\cdot a +  e)\in \R_q\times \R_q$, 
perturbed by small noise $e$ (also called error), from uniform random tuples from $\R_q\times \R_q$, where $\R_q=\Z_q[x]/\langle x^n+1 \rangle$ and $q$ is a positive integer.

Schemes based on the (R)LWE problem face a critical challenge related to the growth of noise during homomorphic operations, which must be carefully controlled to ensure the correct functioning of the encryption scheme. Specifically, the noise must be kept below a certain threshold, which, in the case of BGV, is directly related to the ciphertext modulus parameter $q$. 

As homomorphic operations are performed, the noise increases, and therefore, to maintain the integrity of the scheme, the parameter $q$ must be chosen sufficiently large. However, although increasing $q$ allows for a greater number of operations, it simultaneously compromises both the security and efficiency of the scheme.
Therefore, selecting an appropriate value for $q$ and, in general, determining an optimal set of parameters, is critical. This process requires a balance between security and efficiency while ensuring the correctness of the scheme. \\
One of the key factors in achieving this balance and determining suitable parameters is providing accurate estimates of the error and its growth during the homomorphic operations in the circuit.

This issue is central to research in FHE, and over the years, various approaches have been proposed to address it. As for example, employing the Euclidean norm \cite{brakerski2014leveled}, the infinity norm \cite{fan2012somewhat,kim2021revisiting}, and the canonical norm, also called \textit{worst-case} analysis \cite{costache2016ring,costache2020evaluating,gentry2012homomorphic,iliashenko2019optimisations,mono2022finding}. 
The prevailing trend in the current literature adopts the \textit{average-case analysis}, which involves treating the noise coefficients as random variables distributed according to a Gaussian distribution and studying their expected value and variance.

Interest in this method, initially applied in the TFHE scheme \cite{chillotti2016faster}, and subsequently in the CKKS \cite{costache2022precision,DBLP:journals/cic/BossuatCMNT25}, BGV \cite{costache2023optimisations,murphycentral} and BFV \cite{BFVvar} schemes,  grew due to a recognized discrepancy between the estimates based on worst-case technique and experimental data, as highlighted in \cite{costache2020evaluating}. The introduction of the average-case approach, as seen in \cite{BFVvar,costache2023optimisations}, offers a potential resolution to these disparities, indeed, with this method, it is possible to compute a tight \textit{probabilistic} upper bound.

However, a recent work~\cite{cryptoeprint:2025/1036} criticizes current approaches by showing that the noise coefficients may follow a heavy-tailed distribution, potentially invalidating the assumptions underlying average-case bounds. In particular, focusing on BGV without modulus switching, the authors observe a heavy-tailed behavior of the noise. In the following, we confirm this behavior for the same setting, but we further prove that when modulus switching is applied, this is no longer the case. Specifically, under typical parameter choices used by all major libraries, the noise coefficients in BGV with modulus switching follow a Gaussian distribution, and therefore the average-case noise analysis remains valid. Similarly to our demonstration for BGV, a recent work~\cite{DBLP:journals/cic/BossuatCMNT25} confirms that in CKKS the average-noise analysis remains valid in practice, in line with the observations of~\cite{cryptoeprint:2025/1036}.

It is worth noting that, the heuristics used for the BGV \cite{murphycentral} and CKKS \cite{costache2022precision} schemes often \textit{underestimate} the noise growth due to the assumption of the noises independence, leading to \textit{imprecise bounds}, \ec as also pointed out in \cite{bai2014lattice,costache2022precision,digiusto2025breaking,murphycentral}. 
Such underestimates lead to two potential issues: first, the ciphertext is not correctly decrypted with non-negligible probability due to excessive noise and, second, the scheme is exposed to security vulnerabilities, as shown in recent papers \cite{checri2024practical,cheon2024attacks}.

In light of this, it becomes evident that accounting for the dependencies between the error coefficients is crucial in order to derive increasingly tighter and correct bounds. This, in turn, enables the definition of more accurate operational parameters, making the scheme both more secure and efficient, which is essential for the widespread adoption of FHE.

In this paper, we propose the first average-case noise analysis for BGV that does not provide underestimates, taking into account the dependencies introduced by the common secret and public key. We extend the approach of BFV~\cite{BFVvar}, where the authors introduced a correction function $F$ to adjust the product of variances in homomorphic multiplication, by incorporating the dependency effects of the secret key.
 Similarly, we introduce a correction function which, unlike in~\cite{BFVvar}, is no longer heuristic but derived from formal results presented in our work. Moreover, we observed that in BGV it is necessary to account for additional dependencies introduced by the public key, requiring the correction function to also compensate for these effects. The results obtained in this study suggest that this approach leads to significant improvements in noise analysis.

A related average-case analysis for the BGV scheme is presented in \cite{costache2023optimisations}, where the authors develop a noise estimation method tailored to the specific implementation of BGV in HElib~\cite{halevi2020design}. In contrast, our work proposes a general analysis that does not depend on the specific library and instead focuses on capturing the structural \textit{dependencies} among the errors. We show that considering these dependencies is essential to derive correct, accurate and tighter bounds, independently of specific implementations.

In the BGV scheme, each ciphertext is associated with a \textit{critical quantity} $\nu$ which is a polynomial in $\R$. The critical quantity of a ciphertext $\vc$ defines whether $\vc$ can be correctly decrypted. Specifically, if the size of $\nu$ is below a given bound (depending on $q$) the decryption algorithm works. Otherwise, the plaintext cannot be recovered due to excessive noise growth. Therefore, as previously mentioned, tracking the size of this critical quantity is essential to ensure correct decryption. \\
To provide a clearer picture of what happens to the coefficients of $\nu$, we focus on multiplication, which is the homomorphic operation that highlights most clearly and significantly the dependencies among the critical quantities. 

The BGV public key $\mathsf{pk}\in\R_q\times\R_q$ consists of two polynomials $(-a \cdot s + t e, a)$, where $s$ is the secret key, $t$ is the plaintext modulus, $a\in \R_q$ is randomly chosen and $e\in \R_q$ is the error sampled from a discrete Gaussian distribution $\chi_e$. Roughly speaking, when two ciphertexts are multiplied — even if they were independently computed —  their noises share some common terms which affect the resulting critical quantity $\nu_\mathsf{mult}$. 
More specifically, we observed that the noise in the ciphertexts contains terms that include powers of the secret key $s$ and powers of the term $e$. 
Note that these terms are common to all ciphertexts calculated using the same public key and are responsible for the dependence of the noise.

The paper is structured as follows. Section \ref{sec:BGVpreliminaries} introduces essential definitions and fundamental properties that are necessary for understanding both our contribution and the functioning of the scheme. Section \ref{sec:BGV_scheme} provides a concise overview of the main features and structure of the BGV scheme. In Section \ref{sec:BGVaverage_teo}, we present our key results concerning the behavior of the error term and its growth under homomorphic operations.  Section \ref{sec:BGV_Gaussian} provides a discussion on why, in the BGV scheme, the error coefficients can be modeled as samples from a discrete Gaussian distribution. Section \ref{sec:BGV_circuit} then demonstrates how the findings from Section \ref{sec:BGVaverage_teo} and Section \ref{sec:BGV_Gaussian} can be leveraged to estimate error growth in fixed-operation circuits and to properly select the ciphertext moduli. 
 Section \ref{sec:comparison} compares our approach with state-of-the-art methods, showing how our parameter selection leads to a significant improvement over those currently adopted in major libraries such as OpenFHE and HElib.
Finally, Section \ref{sec:BGVconclusions} concludes the paper and outlines possible directions for future research inspired by our results.

\section{Preliminaries}
\label{sec:BGVpreliminaries}
In this section, we define the general notation and provide the mathematical background that we will use throughout the paper. 
\subsection{Notation}

Let $\Z$ be the ring of integers, and for $d\in\Z_{>0}$ we denote by $\Z_d=\sfrac{\mathbb{Z}}{d\mathbb{Z}}$.
Let $n$ be a power of 2.  We define $\R$ as the ring $\R =\mathbb{Z}[x]/\langle x^n+1\rangle$ and $\R_d = \mathbb{Z}_d[x]/\langle x^n+1\rangle$.

We use $t$ and $q$ to represent the plaintext and the ciphertext modulus, respectively, and $\R_t$ defines the \textit{plaintext space}, where $t$ is chosen such that $t \equiv 1 \mod 2n$. 
Moreover, to define the \textit{ciphertext space}, we need to select $L = M + 1$ moduli, where $M$ is the multiplicative depth of the circuit. Then, for each level $\ell \in \{0,\dots,L-1\}$, we denote 
$$q_\ell = \prod_{j = 0}^{\ell} p_j ,$$
as the \emph{ciphertext modulus} at level $L-1-\ell$. Sometimes, the ciphertext modulo $q_{L-1}$ at level 0 will be indicated simply by $q$.

We use lowercase letters such as $a$ for polynomials and bold letters, like $\bm{a}$, for vectors of polynomials. Moreover, we denote by $a|_i$ the coefficient of $x^i$ of the polynomial $a$. For $a \in \R_d$, we let $[a]_d$ define its \emph{centered reduction modulo $d$}, with coefficients in $[-d/2, d/2)$. Sometimes, we will also write $a \mod d$. 
Unless stated otherwise, we assume that the coefficients of the polynomials in $\R_d$ are always centered modulo $d$.  

Finally, we recall that for \( a,b \in \R \) the \( i \)-th coefficient of their product is given by (see, e.g., \cite{BFVvar})

\begin{equation}\label{productpoly}
    ab|_i = \sum_{j=0}^{n-1} \xi(i,j)a|_jb|_{i-j}, \quad \mbox{ where }\quad  \xi(i,j) =  \left\{ \begin{array}{rl}
        1 \,& \mbox{ for } i-j \in [0, n) \\
     -1 \, & \mbox{ otherwise}.
     \end{array}\right.
\end{equation}

\subsection{Probabilistic distributions}
Given a polynomial $a \in \R$ and a probabilistic distribution $\X$, the notation $a \leftarrow \X$ is used to indicate that each coefficient of $a$ is randomly and independently sampled according to $\X$.
Some distributions that will be frequently considered are the following: 
\begin{itemize}
    \item $\mathcal{U}_q$ as the \textit{uniform distribution} over $\Z_q$ where the representatives modulo $q$ are taken in the interval $\left[-\frac{q}{2},\frac{q}{2}\right)$;
    \item $\mathcal{N}(0,\sigma^2)$ as the normal distribution, also referred to as the \textit{Gaussian distribution}, over $\mathbb{R}$, with mean $0$ and variance $\sigma^2$;
    \item $\mathcal{DG}_{q}(\sigma^2)$ as the \textit{discrete Gaussian distribution}, which involves sampling a value according to $\mathcal{N}(0,\sigma^2)$, rounding it to the nearest integer and then reducing it modulo $q$. Moreover, the representative modulo $q$ is taken in the interval $\left[-\frac{q}{2},\frac{q}{2}\right)$.
\end{itemize}
For the BGV scheme, the notation $\chi_e $ and $\chi_s$ will be adopted in order to indicate the distribution of the error for a RLWE instance and the secret key coefficients, respectively.
Typically, $\chi_e$ is a discrete Gaussian distribution with a suitable standard deviation, while for the secret key, the ternary uniform distribution $\mathcal{U}_3$ is usually considered. However, in some special cases, where bootstrapping is needed, the choice for the secret key distribution falls on the Hamming Weight distribution where the secret key is sampled uniformly among sparse ternary vectors with a fixed number of non-zero entries.

Moreover, we denote by $V_a$ the variance of the coefficients of the polynomial $a$, namely $\V(a|_i)$.
Given $\gamma\in \Z$, if $a,b\in \R$ are two independent polynomials whose coefficients are independent, identically distributed, and have zero mean, then \cite{costache2020evaluating}:
\begin{itemize}
    \item $V_{a+b} = V_a + V_b$
    \item $V_{\gamma a} = \gamma^2V_a$
    \item $V_{a\cdot b} = nV_a\cdot V_b$.
\end{itemize}
Finally, the values of the variance for some common distributions, which will often be employed in the BGV scheme, are
\begin{itemize}
    \item $V_{\mathcal{DG}_q(\sigma^2)} = \sigma^2$ for the discrete Gaussian distribution centered at 0 with standard deviation~$\sigma$;
    \item $V_3 = \frac{2}{3}$ for the ternary distribution $\mathcal{U}_3$;
    \item $V_q = \frac{q^2-1}{12} \approx \frac{q^2}{12}$ for the uniform distribution over integer values in $\left[-\frac{q}{2}, \frac{q}{2}\right)$;
\end{itemize}

\subsection{Infinity and canonical norms}
To conclude this section, we recall the definitions of the infinity and the canonical norm, along with some of their properties.

\begin{definition}
    The \textit{infinity norm} of a polynomial $a \in \R$ is defined as $$\| a \|_{\infty} = \max_{0\le i< n} |\hspace{0.1cm} a|_i \hspace{0.1cm}|.$$
\end{definition}

If $a\in\R_q$ and its coefficients are well-approximated by identically distributed independent Gaussian variables centered at zero, then
\begin{equation}\label{erf}
\mathbb{P} \left( \|a\|_{\infty} > T \right) \leq n \left(1 - \text{erf}\left(\frac{T}{\sqrt{2V_a}}\right) \right),
\end{equation}
where \( \text{erf}(z) \) is the \textit{error function}, defined as $\text{erf}(z) = \frac{2}{\sqrt{\pi}} \int_0^z e^{-t^2} \, dt$, see, e.g., \cite{BFVvar}.
\begin{definition}
    The \textit{canonical embedding norm} of a polynomial $a \in \R$ is
     \begin{equation*}
        \| a \|_{can} = \max_{\substack{1 \leq j < 2n\\ \gcd(j, 2n) = 1}} |a(\zeta^j)| 
        ,
    \end{equation*}
    where $\zeta$ is a primitive $2n$-th root of unity.
    Essentially, it corresponds to the infinity norm of the \textit{canonical embedding} of $a$, denoted as $\sigma(a)$.
\end{definition}
\noindent In our case, for $a,b\in \R$, the relationship between these two norms is given by 
 $|| a  ||_{\infty} \le || a ||_{can}$.
Moreover,  we have \cite{damgaard2012multiparty}:
\begin{align*}
    &|| a\cdot b  ||_{\infty} \hspace{0.12cm} \le n || a  ||_{\infty} \cdot || b  ||_{\infty}  \\
    &|| a\cdot b  ||_{can} \le || a  ||_{can}\cdot|| b  ||_{can} 
\end{align*}

We know that if $a\in\R_q$ is a random polynomial with coefficient variance $V_a$ and $\zeta$ be a primitive $2n^{th}$ root of unity, then the distribution of $a(\zeta)$ is well approximated by a centred Gaussian distribution with variance $nV_a$ \cite{digiusto2025breaking}. 
This immediately translates into a bound on the canonical norm of $a$:
\begin{equation}\label{canonicalnormbound}
||a||^{can} < D \sqrt{n V_a} \,,
\end{equation}
which holds with probability $(1 - e^{-D^2/2})^n \approx 1 - n e^{-D^2/2}$. This means that, for a suitable choice of $D$, the bound fails only with negligible probability \cite{digiusto2025breaking}.
\section{The BGV scheme}
\label{sec:BGV_scheme}
In this section, we recall the three basic encryption functions of the BGV scheme and the homomorphic operations.
\subsection{Basic encryption functions}\label{basicencryption}
\paragraph{Key Generation.}
The key generation function generates  $s \leftarrow \chi_s$ , $a \leftarrow \mathcal{U}_{q_{L-1}}$ and $e \leftarrow \chi_e$ and outputs the \textit{secret key}: $\sk = s\in \R_{q_{L-1}}$,  and the \textit{public key} $\pk = (b,a) \equiv (- a \cdot s + te, a) \mod q_{L-1}$.
\paragraph{Encryption.}
Given the plaintext $m \in \R_t$ and the public key $\pk = (b,a)$, the encryption function outputs the ciphertext $\bm{c} \in \R_{q_{L-1}}^2$ defined as 
    \begin{equation*}
        \bm{c} = (c_0,c_1) \equiv (b \cdot u + te_0 + m, a \cdot u + te_1) \mod q_{L-1},
    \end{equation*}
where $u,e_0,e_1 \in \R_{q_{L-1}}$, with coefficients distributed as $u\leftarrow\chi_s$ and $e_0,e_1\leftarrow\chi_e$. 

\paragraph{Decryption.}
Given a ciphertext $\bm{c}\in \R_{q_{\ell}}^2$ 
and the secret key $\sk = s$, the plaintext is recovered performing the following computations
$m = \left[ \left[ c_0 + c_1 \cdot s \right]_{q_\ell} \right]_t$.

\indent We denote by
    $\nu = [c_0 + c_1 \cdot s]_{q_{\ell}}$ the \textit{critical quantity} corresponding to $\bm{c}$.
In particular, for a fresh ciphertext, it can be rewritten as 
\begin{equation*}
    \nuenc = [c_0 + c_1 \cdot s]_{q_{L-1}} = [ m + t(e \cdot u + e_1 \cdot s + e_0) ]_{q_{L-1}} = [ m + t \epsilon ]_{q_{L-1}},
\end{equation*}
where $\epsilon$ denotes the \textit{error} introduced during encryption.

Considering the reduction modulo $t$ of the critical quantity, it is possible to verify that the plaintext is successfully recovered.
However, if the error is \textit{too large}, the value $m + t\epsilon$ could wrap around the modulus, resulting in an incorrect decryption. So, 
the decryption is correct only if the coefficients of $m + t\epsilon$ remain below a certain threshold.

In addition, the error associated to the ciphertext increases through homomorphic operations \cite{costache2016ring}. Therefore, it is crucial to estimate the magnitude of the critical quantity (called also \textit{noise}) which is typically analyzed using its norm.

\noindent In light of this, to guarantee the correctness of the decryption, the condition on the critical quantity can be expressed as follows \cite{mono2022finding}:
\begin{equation*}
    || \nu ||_{\infty} \le  || \nu ||_{can} < \frac{q_\ell}{2}.
\end{equation*}
Naturally, in order to bound the noise, any type of norm could be used.

Finally, another concept that is often introduced for the estimation of error growth is the \textit{noise budget}, which represents the number of bits remaining before wrap-around would occur.

We will use the term \textit{extended ciphertext} to refer to the tuple $\fc = (\bm{c}, q_\ell , \nu)$ where $\bm{c}$ is the actual ciphertext, $q_\ell$ is the ciphertext modulus, and $\nu$ is the associated critical quantity.
\begin{definition}\label{def.noisebud}
    Let $(\bm{c},q_\ell,\nu)$ be an extended ciphertext. The \textit{noise budget} associated to $\bm{c}$ is the quantity 
    $\log_2(q_\ell) - \log_2(||\nu||) -1$, where $||\cdot ||$ refers to a fixed norm.
\end{definition}

\subsection{Homomorphic Operations}

\paragraph{Ciphertext Addition.}
Let $(\bm{c},q_\ell,\nu)$ and $(\bm{c}',q_\ell,\nu')$ be two extended ciphertexts. 
Their homomorphic sum is
\begin{equation*}
    \mathsf{Add}(\bm{c}, \bm{c'}) := (c_0 + c_0',\; c_1 + c_1') \mod q_\ell.
\end{equation*}
with resulting critical quantity 
$\nu_{\mathsf{add}} = \nu + \nu'$.
\paragraph{Constant Multiplication.}
Let $(\bm{c},q_\ell,\nu)$ be an extended ciphertext and $k \in \R_t$ a fixed polynomial. The homomorphic multiplication by \( k \) is
\begin{equation*}
\mathsf{Mul}_k(\bm{c}) := (k c_0,\; k c_1) \mod q_\ell,
\end{equation*}
and the corresponding critical quantity scales by $k$, yielding $\nu_{\mathsf{const}} = k \cdot \nu$.
\paragraph{Homomorphic multiplication.} 
Let $\bm{c},\bm{c'}$ be two ciphertexts defined in $\R_{q_\ell}$, then
\begin{equation*}
     \mathsf{Mul}(\bm{c}, \bm{c'}) := (d_0, d_1,d_2) = (c_0 \cdot c_0' , c_0 \cdot c_1' + c_1 \cdot c_0', c_1 \cdot c_1' ) \mod q_\ell .
\end{equation*}
As a result, the ciphertext expands from two to three polynomials, which violates the compactness property and makes subsequent operations more costly.  

Recovering the message from $\mathsf{Mul}(\bm{c}, \bm{c'})$ requires computing the reduction modulo $t$ of the resulting critical quantity given by $\nu_{\mathsf{mul}} = d_0 + d_1 s  + d_2 s^2$.

To modify the ciphertext polynomial $d_0 + d_1 s  + d_2 s^2$ back to another polynomial $\bar c_0 + \bar c_1 \cdot s$ encrypting the same plaintext, a technique known as \textit{relinearization}, or \textit{key switching}, is employed.

\paragraph{Key switching.} Intuitively, the key switching technique converts $d_2 s^2$ into $\hat c_0 + \hat c_1 s$ using \textit{somehow} the encryption of $s^2$ under $s$. Indeed, 
$$
\mathsf{Enc}_{s}(s^2)=(\beta,\alpha)  = (- u s + t e + s^2,  u + t e_1  ) \approx (\alpha s + s^2, -\alpha).
$$
Thus, $s^2\approx\beta-\alpha s$ and then $d_0 + d_1 s  + d_2 s^2 \approx d_0 \ec  + d_1 \ec  s  + d_2 (\beta  - \alpha s) = \bar c_0 + \bar c_1 s$.
As expected, while the relinearization step introduces additional noise, it is crucial for ensuring the practicality of the scheme. 

Several key switching techniques have been proposed in the literature, each aiming to optimize this trade-off between correctness and noise growth. The most commonly used are the Brakerski Vaikuntanathan (BV)  variant \cite{brakerski2011fully}, the Gentry Halevi Smart (GHS)  variant \cite{gentry2012homomorphic}, and the Hybrid variant \cite{gentry2012homomorphic}, which can be considered as a combination of the previous ones.
Herein, we do not delve into the details of each method and refer the reader to \cite{mono2022finding} for further information.

\paragraph{Modulus switching.}
The primary aim of the modulus switching technique is to reduce the noise resulting from homomorphic operations. 

Let $(\bm{c}, q_\ell, \nu)$ be the extended ciphertext whose error we aim to reduce, and let $\ell'$ be an integer such that $q_{\ell'} < q_\ell$. The modulus switching procedure outputs $(\bm{c}',q_{\ell'},\nu')$, 
where
\begin{equation*}
    \bm{c}' = \frac{q_{\ell'}}{q_\ell} (\bm{c} + \bm{\delta}) \mod q_{\ell'},
\end{equation*}
with $\bm{\delta} = t[-\bm{c}t^{-1}]_{{q_\ell}/{q_{\ell'}}}$. The $\bm{\delta}$ value can be interpreted as a correction required to ensure that the ciphertext is divisible by ${q_\ell}/q_{\ell'}$ and does not affect the original message. In fact, it only influences the error since $\bm{\delta}\equiv0 \mod t$. Therefore, the new ciphertext $\bm{c}'$ will still decrypt to the original plaintext (scaled by a factor of ${q_\ell}/q_{\ell'}$).

The critical quantity associated to the new ciphertext $\bm{c'}$ can be expressed in terms of that of $\bm{c}$, namely,
\begin{align*}
    \nums &= [c_0'+ c_1' \cdot s]_{q_{\ell'}} =  \frac{q_{\ell'}}{q_\ell}([c_0+c_1 \cdot s]_{q_\ell}+ \delta_0 + \delta_1\cdot s) =  \frac{q_{\ell'}}{q_\ell}(\nu+ \delta_0 + \delta_1\cdot s).
\end{align*}

\section{Average-Case Noise Analysis for BGV}
\label{sec:BGVaverage_teo}
The aim of this section is to investigate the behavior of the noise resulting from the main homomorphic operations supported by the BGV scheme. Throughout this analysis, we consider ciphertexts that are mutually independent, obtained by encrypting independently generated random messages using the same public key.

As previously mentioned, the novel approach introduced in this paper seeks to analyze noise growth by accounting for dependencies among the coefficients of the critical quantities involved. Before delving into the details, it is important to highlight that in BGV, the critical quantity resulting from homomorphic operations can be affected by such dependencies, even when the ciphertexts involved are independent. These dependencies arise because the noise in the ciphertexts includes terms involving powers of the secret key \( s \) and powers of the error term \( e \), which makes it necessary to explicitly consider these contributions when analyzing the variance of the noise.

To study the impact of \( s \) and \( e \), we isolate their contribution in the expression of the critical quantity \( \nu \), using the following notation:
\begin{equation*}
    \nu =  \sum_{\iota} a_{\iota}s^{\iota} = \sum_{\iota}\sum_{\mu} b_{\mu}(\iota) e^{\mu} s^{\iota},
\end{equation*}
where \( a_{\iota} = \sum_{\mu} b_{\mu}(\iota) e^{\mu} \), and \( b_{\mu}(\iota) \) contains no powers of \( s \) or \( e \).

To enhance clarity, for the critical quantity \( \nuenc \) of a fresh ciphertext \( \bm{c} \), this notation yields 
\begin{equation*}
    \nuenc = a_0 + a_1s = b_0(0) + b_1(0)e + b_0(1)s,
\end{equation*}
where
\begin{equation*}
    \begin{cases}
        a_0 = b_0(0) + b_1(0)\cdot e =  (m + te_0) + tu \cdot e \\
        a_1 = b_0(1) = te_1
    \end{cases}
\end{equation*} 
Before introducing our method for studying the growth of error through its variance in fixed circuits, we begin by presenting some considerations and results we have derived regarding the distribution of the coefficients of a generic error term, along with certain properties related to their variances. We then proceed to describe how these properties are used to estimate the variance of the error coefficients after homomorphic multiplications, and finally how such estimates can be applied to circuits consisting of fixed sequences of operations.

\subsection{Mean and Variance Analysis}
 In the following, we prove that the coefficients of the error term are centered at zero. Furthermore, we show that the coefficients of the $b_{\mu}$ terms are uncorrelated, meaning that their pairwise covariance is zero.

\begin{lemma}\label{lemma0}
    Let $\nu = \sum_{\iota} \sum_{\mu} b_{\mu}(\iota) e^{\mu}s^{\iota}$ be the critical quantity associated with a given ciphertext. Then, the following properties hold \begin{itemize}
    \item[a)]  $\Cov(b_{\mu_1}(\iota_1)|_{j_1},b_{\mu_2}(\iota_2)|_{j_2})= 0 $ for $\mu_1 \neq \mu_2 \text{ or } j_1 \neq j_2$, $\forall \iota_1, \iota_2$;
    \item[b)] $\E[b_{\mu}(\iota)|_i] = 0$, $\forall \iota, \mu,i$;
    \end{itemize}
    
\end{lemma}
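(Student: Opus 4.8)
The plan is to first reduce part (a) to part (b): once part (b) gives $\E[b_\mu(\iota)|_i]=0$, the covariance collapses to a plain second moment, $\Cov(b_{\mu_1}(\iota_1)|_{j_1},b_{\mu_2}(\iota_2)|_{j_2})=\E[b_{\mu_1}(\iota_1)|_{j_1}\,b_{\mu_2}(\iota_2)|_{j_2}]$, so both claims become statements about expectations of products of the $b$-coefficients. I would then view every $b_\mu(\iota)$ as a polynomial in the \emph{fresh}, mutually independent, zero-mean quantities that are neither $s$ nor $e$ --- the message $m$ and the encryption randomness $u,e_0,e_1$ of each (independent) ciphertext --- and prove both statements by induction on the operations (encryption, addition, tensoring/multiplication) that build $\nu$, starting from the explicit fresh decomposition $\nuenc=(m+te_0)+(tu)\,e+(te_1)\,s$ recorded above.

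The engine of the proof is a set of invariants maintained for each coefficient $b_\mu(\iota)|_j$ along the circuit. First, \emph{multilinearity}: every monomial of $b_\mu(\iota)|_j$ is a product of distinct fresh coefficient-variables, each to the first power, times a power of $t$; this is exactly what the standing assumption that the multiplied ciphertexts are mutually independent buys us, since it forbids a fresh variable from ever multiplying itself. Second, an \emph{index-sum} invariant: in $\R=\Z[x]/\langle x^n+1\rangle$ monomials multiply as $x^a x^b=\pm x^{(a+b)\bmod n}$ (cf.\ \eqref{productpoly}), so by induction every monomial of $b_\mu(\iota)|_j$ has its fresh-variable indices summing to $j\bmod n$. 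Third, a \emph{count} invariant tying $\mu$ to the fresh content: in $c_0+c_1s$ the only surviving occurrence of $u$ is the term $t\,e\,u$ (the $-asu$ and $+aus$ cancel), so each power of $e$ is born paired with exactly one $u$, whence every monomial of $b_\mu(\iota)$ carries exactly $\mu$ factors of $u$-type. Part (b) is then immediate: a multilinear monomial always contains at least one fresh factor, so its expectation factors and vanishes because that factor is zero-mean and independent of the rest.

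For part (a) I would argue as follows. By multilinearity, the product of a monomial $M_1$ of $b_{\mu_1}(\iota_1)|_{j_1}$ and a monomial $M_2$ of $b_{\mu_2}(\iota_2)|_{j_2}$ has nonzero expectation only if $M_1$ and $M_2$ coincide as sets of fresh coefficient-variables; otherwise some variable occurs exactly once, is independent of the others, and forces the expectation to zero. If $M_1=M_2$, the index-sum invariant gives $j_1\equiv j_2\pmod n$, hence $j_1=j_2$ since both lie in $\{0,\dots,n-1\}$, while the $u$-count invariant gives $\mu_1=\mu_2$. Contrapositively, $\mu_1\neq\mu_2$ or $j_1\neq j_2$ leaves an unpaired zero-mean factor in every monomial, so the expectation --- and therefore the covariance --- vanishes. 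The inductive step for addition is routine (independent summands carrying the same $(\mu,\iota,j)$ tags), and for multiplication it follows from the ring convolution, which adds $e$-powers, adds $u$-counts, and adds indices modulo $n$, all consistently with the three invariants.

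The main obstacle I anticipate lies not in this additive/multiplicative core but in the remaining operations. Multiplication by a fixed plaintext polynomial $k\in\R_t$ mixes coefficient positions through \eqref{productpoly} and breaks the index-sum invariant unless $k$ is a scalar, so the claim for $j_1\neq j_2$ is delicate precisely there; I would either restrict the lemma to the encryption/addition/multiplication fragment that governs the multiplication noise (the paper's stated focus) or treat constants as scalars. Likewise, key switching injects a fresh error that is \emph{shared} across ciphertexts --- behaving like a second copy of $e$ and demanding an extension of the $e$-bookkeeping --- while modulus switching adds the rounding term $\bm\delta=t[-\bm{c}t^{-1}]_{q_\ell/q_{\ell'}}$, which depends on the ciphertext itself and is therefore not independent of the $b$-coefficients; making that step rigorous requires the customary heuristic that models $\bm\delta$ as fresh uniform noise. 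These are the points where the clean combinatorial argument needs either a restriction of scope or an auxiliary independence assumption.
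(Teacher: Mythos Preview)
Your argument is correct on the encryption/addition/multiplication fragment and takes a genuinely different route from the paper. The paper also proceeds by induction on the homomorphic operations, but it stays entirely at the covariance layer: for each operation it writes the new $b_\mu(\iota)$ in terms of the old ones and repeatedly applies bilinearity of $\Cov$ together with an auxiliary fact --- $\Cov(X_1X_2,X_3X_4)=0$ whenever $X_2,X_4$ are independent of $X_1,X_3$, $\Cov(X_2,X_4)=0$ and $\E[X_2]=0$ --- never unfolding the $b$'s into monomials in the fresh variables. Your approach instead exposes the combinatorial skeleton (the multilinearity, index-sum and $u$-count invariants) and reads both (a) and (b) off the structure of the surviving monomial pairs $M_1M_2$; this is more explanatory and makes visible exactly where the ``independently-computed ciphertexts'' hypothesis is spent (it is what guarantees multilinearity), whereas the paper's abstraction lets it dispatch each inductive step in a couple of lines once the auxiliary covariance fact is available. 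On the remaining operations your caveats line up with what the paper actually does: modulus and key switching are handled via the same heuristic you name (the $\delta_i$ are modeled as fresh independent uniform polynomials and the step is reduced to the addition case), and the constant-multiplication step is written as if the constant were a scalar, so the $j_1\neq j_2$ clause there is indeed as fragile under a genuine polynomial constant as you suspected.
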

A proof of Lemma \ref{lemma0} can be found in Appendix \ref{proof.lemma0}
\begin{lemma}\label{lemma1}
 Let $\nu = \sum_{\iota} a_{\iota}s^{\iota}$ represent the critical quantity associated with a given ciphertext, where, for a fixed $\iota$,  $a_{\iota} = \sum_{\mu} b_{\mu}(\iota) e^{\mu}$. Then, the following identity holds
    \begin{align*}
    \V(a_{\iota}s^{\iota}|_i ) = \sum_{\mu \ge 0} \V(b_{\mu}(\iota)|_i) \sum_{k=0}^{n-1} \mathbb{E}[e^{\mu}|_k^2] \sum_{j=0}^{n-1} \mathbb{E}[s^{\iota}|_j^2] .
\end{align*}
Moreover,
   $\V(\nu|_i) = 
   \sum_{\iota,\mu \ge 0} \V(b_{\mu}(\iota)|_i) \sum_{k=0}^{n-1} \mathbb{E}[e^{\mu}|_k^2] \sum_{j=0}^{n-1} \mathbb{E}[s^{\iota}|_j^2].$

\end{lemma}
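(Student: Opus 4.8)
The plan is to fix $\iota$ first, prove the single identity for $a_\iota s^\iota|_i$, and then obtain the formula for $\V(\nu|_i)$ by summing. I would expand $a_\iota s^\iota|_i=\sum_\mu (b_\mu(\iota)\,e^\mu\, s^\iota)|_i$ and compute the variance as a sum of covariances. Since the $b_\mu(\iota)$ contain no powers of $s$ or $e$, the three families $\{b_\mu(\iota)\}$, $\{e^\mu\}$, $\{s^\iota\}$ are mutually independent, and by Lemma \ref{lemma0}(b) all coefficients of the $b$'s are centred; hence every monomial that appears has mean zero and $\V(a_\iota s^\iota|_i)=\E[(a_\iota s^\iota|_i)^2]$. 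The first step is to discard the cross terms in $\mu$: in any product of a $\mu$-summand with a $\mu'$-summand the $b$-factor splits off, by independence, as $\E[b_\mu(\iota)|_{k}b_{\mu'}(\iota)|_{k'}]$, which by Lemma \ref{lemma0}(a) vanishes whenever $\mu\neq\mu'$. Thus $\V(a_\iota s^\iota|_i)=\sum_\mu \V\!\big((b_\mu(\iota)\,e^\mu\,s^\iota)|_i\big)$.

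For each fixed $(\iota,\mu)$ I would regard $b_\mu(\iota)$ as one factor and $g:=e^\mu s^\iota$ as the other; they are independent, and by Lemma \ref{lemma0} the coefficients of $b_\mu(\iota)$ are zero-mean and pairwise uncorrelated. Expanding $(b_\mu(\iota)\,g)|_i$ with \eqref{productpoly} and using this uncorrelatedness to diagonalise the resulting double sum leaves only $\sum_{k}\xi(i,k)^2\,\V(b_\mu(\iota)|_k)\,\E[(g|_{i-k})^2]$. The identically-distributed assumption on the coefficients of $b_\mu(\iota)$ then replaces $\V(b_\mu(\iota)|_k)$ by the single value $\V(b_\mu(\iota)|_i)$, and after reindexing $i-k\mapsto j$ this becomes $\V(b_\mu(\iota)|_i)\sum_{j}\E[(g|_j)^2]$.

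It remains to show $\sum_j \E[(e^\mu s^\iota|_j)^2]=\big(\sum_k\E[e^\mu|_k^2]\big)\big(\sum_l\E[s^\iota|_l^2]\big)$, and this is the step I expect to be the main obstacle. The subtlety is that for $\mu,\iota\ge 2$ the coefficients of $e^\mu$ and of $s^\iota$ are \emph{not} uncorrelated, so the factorisation is not a formal consequence of independence alone. I would prove it using the symmetry of the distributions of $e$ and $s$: writing the left side through the canonical embedding as $\tfrac1n\sum_\zeta \E[|e(\zeta)|^{2\mu}]\,\E[|s(\zeta)|^{2\iota}]$ (the sum over the primitive $2n$-th roots $\zeta$, using $e\independent s$), it suffices that $\E[|e(\zeta)|^{2\mu}]$ and $\E[|s(\zeta)|^{2\iota}]$ be constant in $\zeta$. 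This constancy holds because each automorphism $x\mapsto x^{g}$ ($g$ odd) merely signed-permutes the i.i.d.\ symmetric coefficients, so $e(\zeta)$ and $s(\zeta)$ are identically distributed over the Galois orbit of all primitive roots; the product of averages then factors and gives exactly $\sum_k\E[e^\mu|_k^2]\sum_l\E[s^\iota|_l^2]$. (Equivalently, one can argue directly that $x\mapsto -x$ kills all odd-parity off-diagonal contributions and that the even-parity ones cancel.) Assembling the three steps yields the per-$\iota$ identity.

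For the second identity I would run the same expansion on $\nu=\sum_{\iota,\mu}b_\mu(\iota)e^\mu s^\iota$. Lemma \ref{lemma0}(a) again forces $\mu_1=\mu_2$ together with equal $b$-indices, and the remaining cross-$\iota$ covariances vanish because the corresponding $b$-components are built from independent fresh randomness (messages and encryption errors); only the $\iota$-diagonal survives, so $\V(\nu|_i)=\sum_\iota \V(a_\iota s^\iota|_i)$ and the claim follows by summing the per-$\iota$ identities. The points that require care throughout are the negacyclic sign bookkeeping in \eqref{productpoly}, the identically-distributed hypothesis that collapses the coefficient-wise variances to the single value $\V(b_\mu(\iota)|_i)$, the vanishing of the cross-$\iota$ terms, and above all the separation of the $e$- and $s$-sums, which is where the symmetry of the key and error distributions is genuinely used.
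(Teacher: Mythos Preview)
Your per-$\iota$ argument is correct but takes a different route from the paper's. The paper factors the triple product as $(b_\mu(\iota)e^\mu)\cdot s^\iota$ and then as $b_\mu(\iota)\cdot e^\mu$, invoking its covariance Property~\ref{cov} at each stage; the separation into $\sum_k\E[e^\mu|_k^2]$ and $\sum_j\E[s^\iota|_j^2]$ then falls out at the end from the index-independence of these second moments. You instead factor as $b_\mu(\iota)\cdot(e^\mu s^\iota)$, diagonalise once using Lemma~\ref{lemma0}(a), and prove the separation $\sum_j\E[(e^\mu s^\iota|_j)^2]=\bigl(\sum_k\E[e^\mu|_k^2]\bigr)\bigl(\sum_l\E[s^\iota|_l^2]\bigr)$ directly via Parseval on the canonical embedding together with Galois transitivity. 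Your route is arguably more transparent: the coefficients of $e^\mu$ and $s^\iota$ are genuinely correlated for exponents $\ge 2$, and you confront that head-on, whereas in the paper the same issue is buried inside the claim that the $(\mu,j)$-cross covariances vanish by Property~\ref{cov} --- a claim whose verification (through $\Cov\bigl((b_\mu e^\mu)|_{j_1},(b_\mu e^\mu)|_{j_2}\bigr)=0$ for $j_1\ne j_2$) quietly needs the same symmetry you invoke. The paper's approach buys uniformity of method; yours isolates the one structural input.

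There is one gap in your derivation of the second identity. After Lemma~\ref{lemma0}(a) forces $\mu_1=\mu_2$ and equal $b$-coefficient indices, you dismiss the surviving cross-$\iota$ terms by saying the $b_\mu(\iota)$ for different $\iota$ are ``built from independent fresh randomness''. That is the right intuition for a fresh ciphertext, and it can be propagated inductively through the homomorphic operations, but it is \emph{not} what Lemma~\ref{lemma0}(a) states: that lemma gives $\Cov(b_{\mu_1}(\iota_1)|_{j_1},b_{\mu_2}(\iota_2)|_{j_2})=0$ only when $\mu_1\neq\mu_2$ or $j_1\neq j_2$, and says nothing about the case $\mu_1=\mu_2$, $j_1=j_2$, $\iota_1\neq\iota_2$ that you need here. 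The repair is to strengthen Lemma~\ref{lemma0}(a) to include this case (its proof extends without difficulty); the paper's own argument for this step, which also appeals to Property~\ref{cov}, has the same lacuna.
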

A proof of \Cref{lemma1} can be found in Appendix \ref{proof.lemma1}.

\subsection{Homomorphic Operations}

We can state our results on the variance computation for operations, especially focusing on the multiplication in the next section.

\begin{proposition}[Encryption]\label{prop.enc}
The invariant noise $\nuenc$ of a fresh ciphertext has coefficient variance 
        \begin{equation}\label{eq.V.enc}
            V_\mathsf{clean} = \V(\nuenc|_i) = \tfrac{t^2}{q^2} \left(\tfrac{1}{12} + nV_eV_u + V_e + nV_eV_s \right).
        \end{equation} 
\end{proposition}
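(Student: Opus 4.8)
The plan is to reduce \eqref{eq.V.enc} to a direct application of Lemma~\ref{lemma1}. The beginning of this section already records the decomposition of the fresh critical quantity in the form $\nu = \sum_{\iota}\sum_{\mu} b_{\mu}(\iota)\,e^{\mu}s^{\iota}$, whose only nonzero coefficients are $b_0(0)=m+te_0$, $b_1(0)=tu$ and $b_0(1)=te_1$. Since we work in the regime where decryption succeeds, the centered reduction modulo $q=q_{L-1}$ is inert and may be dropped; the invariant (clean) noise is then obtained from this critical quantity by the level-independent normalization $q^{-1}$ (so that the correctness threshold $q/2$ becomes $1/2$). Hence it suffices to compute $\V\big((\sum_{\iota,\mu} b_\mu(\iota)e^\mu s^\iota)|_i\big)$ and multiply by $q^{-2}$.

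First I would compute the three relevant coefficient variances $\V(b_\mu(\iota)|_i)$. Because $m$, $e_0$, $u$ and $e_1$ are mutually independent with centered coefficients, the scaling rule $V_{\gamma a}=\gamma^2 V_a$ together with additivity $V_{a+b}=V_a+V_b$ give $\V(b_0(0)|_i)=V_m+t^2V_e$, $\V(b_1(0)|_i)=t^2V_u$ and $\V(b_0(1)|_i)=t^2V_e$, where I model the message as uniform over the centered residues modulo $t$, so that $V_m\approx t^2/12$ in analogy with $V_q\approx q^2/12$.

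Next I would feed these into the formula of Lemma~\ref{lemma1}, which expresses the variance as $\sum_{\iota,\mu}\V(b_\mu(\iota)|_i)\big(\sum_{k}\E[e^\mu|_k^2]\big)\big(\sum_{j}\E[s^\iota|_j^2]\big)$. The one delicate point is evaluating the two energy sums for each term: for the constant polynomial $e^0=s^0=1$ one has $\sum_{k}\E[(e^0)|_k^2]=\sum_{j}\E[(s^0)|_j^2]=1$ (a single nonzero coefficient equal to $1$, not $n$), whereas $\sum_k\E[e|_k^2]=nV_e$ and $\sum_j\E[s|_j^2]=nV_s$ since $e$ and $s$ are centered with $n$ coefficients. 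Substituting, the $(\iota,\mu)=(0,0)$ term contributes $V_m+t^2V_e$, the $(0,1)$ term contributes $t^2V_u\cdot nV_e$, and the $(1,0)$ term contributes $t^2V_e\cdot nV_s$, so that $\V(\nu|_i)=V_m+t^2(V_e+nV_eV_u+nV_eV_s)$. Dividing by $q^2$ and inserting $V_m\approx t^2/12$ collects the common factor $t^2/q^2$ and reproduces exactly \eqref{eq.V.enc}.

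I expect no serious obstacle here: this is the base case of the analysis, and precisely because a single fresh encryption reuses none of its error or key terms, no covariance corrections are needed and the variances simply add. The only points requiring attention are the $q^{-1}$ normalization built into the definition of the invariant noise, the modeling choice $V_m\approx t^2/12$, and the correct evaluation of the constant-polynomial energy sums as $1$ rather than $n$; mishandling the latter would spuriously inflate the message and $te_1$ contributions. This stands in contrast to the multiplicative case of the next section, where powers of $s$ and $e$ recur across the factors and the naive product rule $V_{a\cdot b}=nV_aV_b$ must be corrected.
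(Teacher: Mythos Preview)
Your argument is correct and follows exactly the paper's route: write $\nuenc=b_0(0)+b_1(0)e+b_0(1)s$ with $b_0(0)=m+te_0$, $b_1(0)=tu$, $b_0(1)=te_1$, then apply Lemma~\ref{lemma1} using $\E[s|_j^2]=V_s$ and $\E[e|_j^2]=V_e$. The paper's proof is terser and does not comment on the $q^{-2}$ factor you justify via a normalization; indeed, when the same quantity is recomputed as $V_0$ in Section~\ref{sec:BGV_circuit} this factor is absent, so your reading of it as an extrinsic rescaling (rather than something to be derived) is the right one.
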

\begin{proof}
The fresh error $\nuenc$ can be written as
$a_0 + a_1s = b_0(0) + b_1(0)e + b_0(1)s$,
where \begin{equation*}
    \begin{cases}
        a_0 = b_0(0) +   b_1(0)\cdot e =  (m + te_0) + tu \cdot e    \\
        a_1 =  b_0(1) = te_1. 
    \end{cases}
\end{equation*}
The proof is thus concluded by applying Lemma~\ref{lemma1} and observing that $\E[s|_j^2] = \V(s|_j) = V_s$ and $\E[e|_j^2] = \V(e|_j) = V_e$ since $\E[s|_j] = \E[e|_j] = 0$.
\end{proof}

\begin{proposition}[Addition \& Constant Multiplication]\label{prop.add}
Let $\alpha \in \R_t$ and $\vc$, $\vc'$ be two independently-computed ciphertexts with invariant noises $\nu,$ and $\nu'$, respectively. Then, the variance of the error coefficients
\begin{itemize}
    \item resulting from the addition of $\vc$ and $\vc'$ is 
        \begin{equation}
            \label{eq.V.add}
            \V(\nuadd|_i) = \V(\nu|_i) + \V(\nu'|_i).
        \end{equation}  
    \item after a multiplication between $\alpha$ and $\vc$ is 
        \begin{equation}
            \label{eq.V.con}
            \V(\numulconst|_i) = \tfrac{(t^2-1)n}{12} \V(\nu|_i).
        \end{equation}
\end{itemize}
\end{proposition}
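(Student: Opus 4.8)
The plan is to treat both statements through the structural decomposition $\nu = \sum_{\iota,\mu} b_\mu(\iota)\,e^\mu s^\iota$ and to apply \Cref{lemma1} directly to the operated critical quantity, rather than manipulating the coefficients $\nu|_i$ by hand. This is the natural route precisely because the two ciphertexts are encrypted under the same key, so they share the secret key $s$ and the public-key error $e$, and a coefficient-level argument would have to confront this shared randomness head-on.

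For the addition, I would first note that $\vc$ and $\vc'$ admit decompositions $\nu = \sum_{\iota,\mu} b_\mu(\iota)\,e^\mu s^\iota$ and $\nu' = \sum_{\iota,\mu} b'_\mu(\iota)\,e^\mu s^\iota$ built from the \emph{same} monomials $e^\mu s^\iota$. Hence $\nuadd = \nu + \nu' = \sum_{\iota,\mu}\bigl(b_\mu(\iota) + b'_\mu(\iota)\bigr)e^\mu s^\iota$, i.e. the decomposition coefficients of the sum are exactly $b_\mu(\iota) + b'_\mu(\iota)$. Applying \Cref{lemma1} to $\nuadd$ then reduces the claim to analysing $\V\bigl((b_\mu(\iota)+b'_\mu(\iota))|_i\bigr)$. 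Because $\vc$ and $\vc'$ are computed from independent fresh randomness, $b_\mu(\iota)$ and $b'_\mu(\iota)$ are independent, so this variance splits additively; substituting back and separating the resulting double sum recovers $\V(\nu|_i) + \V(\nu'|_i)$ by a second application of \Cref{lemma1}, giving \eqref{eq.V.add}.

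For the constant multiplication, I would write $\numulconst = \alpha\nu = \sum_{\iota,\mu}\bigl(\alpha\, b_\mu(\iota)\bigr)e^\mu s^\iota$, so that the decomposition coefficients become $\alpha\, b_\mu(\iota)$, and again invoke \Cref{lemma1}. Modelling the fixed polynomial $\alpha\in\R_t$ as having coefficients drawn uniformly from $\Z_t$ — mean zero, variance $\tfrac{t^2-1}{12}$, and independent of $b_\mu(\iota)$ — the product-variance rule yields $\V\bigl((\alpha\, b_\mu(\iota))|_i\bigr) = n\tfrac{t^2-1}{12}\,\V(b_\mu(\iota)|_i)$; here the independence and zero mean of $\alpha$ are what make the off-diagonal covariances of the product expansion \eqref{productpoly} vanish, so that only the identical distribution of the $b_\mu(\iota)|_i$ is needed. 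Factoring the constant $n\tfrac{t^2-1}{12}$ out of the sum in \Cref{lemma1} then leaves precisely $\V(\nu|_i)$, yielding \eqref{eq.V.con}.

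The main obstacle is the addition: at first sight the shared $s$ and $e$ suggest a nonzero covariance $\Cov(\nu|_i,\nu'|_i)$, which would spoil the additive formula. The crux is that this covariance in fact vanishes — the shared randomness is entirely carried by the common monomials $e^\mu s^\iota$, which factor out, while the residual coefficients $b_\mu(\iota)$ and $b'_\mu(\iota)$ are genuinely independent across the two ciphertexts and, by \Cref{lemma0} part (b), have zero mean. Making this precise is exactly what the decomposition together with \Cref{lemma1} accomplishes, so the real care lies in justifying the independence of the $b$-coefficients across the two independently-computed ciphertexts rather than in any computation. A secondary, more routine point is the heuristic identification of the deterministic constant $\alpha$ with a uniform random variable over $\R_t$, which is what renders \eqref{eq.V.con} meaningful.
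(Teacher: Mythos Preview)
Your proposal is correct and follows essentially the same route the paper indicates: the paper omits the proof entirely, stating only that it ``follows directly from Lemma~\ref{lemma1} by arguments analogous to those in Proposition~\ref{prop.enc}'', and your argument is precisely the natural unpacking of that sentence --- decompose $\nuadd$ and $\numulconst$ in terms of the new $b$-coefficients (which is exactly what the paper does in the proof of \Cref{lemma0} for these two operations), then feed them into \Cref{lemma1}. Your explicit discussion of why the shared $s,e$ do not create cross-covariance in the addition case, and of the heuristic treatment of $\alpha$ as uniform over $\R_t$, fills in details the paper leaves implicit.
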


\begin{proposition}[Modulo Switch] \label{prop.ms}
Let $\fc = (\vc, q_\ell,\nu)$ be an extended ciphertext. The variance of the error coefficients after the modulo switch to the target modulo $q'_\ell$ is 
\begin{equation}\label{eq.V.ms}
    V(\nums|_i)=\tfrac{q_\ell'^2}{q_\ell^2} \V(\nu|_i) + \tfrac{t^2}{12}(1 + nV_s).
\end{equation}
\end{proposition}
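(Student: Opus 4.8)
The plan is to start from the closed-form expression for the modulus-switched critical quantity already derived in \Cref{sec:BGV_scheme}, namely
\[
\nums = \frac{q_{\ell'}}{q_\ell}\bigl(\nu + \delta_0 + \delta_1 \cdot s\bigr),
\qquad \delta_j = t\,[-c_j t^{-1}]_{q_\ell/q_{\ell'}},
\]
and to evaluate the variance of its $i$-th coefficient term by term. First I would apply the scaling rule $V_{\gamma a} = \gamma^2 V_a$ to extract the deterministic factor $(q_{\ell'}/q_\ell)^2$, so that the task reduces to computing $\V\bigl((\nu + \delta_0 + \delta_1 s)|_i\bigr)$.

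The next step is to split this quantity into a sum of variances. Assuming that the critical quantity $\nu$, the rounding correction $\delta_0$, and the product $\delta_1 s$ are pairwise uncorrelated with zero-mean coefficients, the variance of the sum equals the sum of the variances, giving $\V(\nu|_i) + V_{\delta_0} + \V\bigl((\delta_1 s)|_i\bigr)$. To handle the $\delta_j$ terms I would model each coefficient of $[-c_j t^{-1}]_{q_\ell/q_{\ell'}}$ as a centered residue uniform over the interval $[-\tfrac{q_\ell/q_{\ell'}}{2}, \tfrac{q_\ell/q_{\ell'}}{2})$, so that by the uniform-variance value recorded in \Cref{sec:BGVpreliminaries} one obtains $V_{\delta_j} = t^2 (q_\ell/q_{\ell'})^2/12$. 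Applying $V_{a\cdot b} = n V_a V_b$ to the product term yields $\V((\delta_1 s)|_i) = n V_{\delta_1} V_s$, and collecting the common factor gives
\[
\V\bigl((\nu + \delta_0 + \delta_1 s)|_i\bigr) = \V(\nu|_i) + \frac{t^2 (q_\ell/q_{\ell'})^2}{12}\bigl(1 + nV_s\bigr).
\]
Multiplying back by $(q_{\ell'}/q_\ell)^2$ cancels the $(q_\ell/q_{\ell'})^2$ factor in the second summand and reproduces exactly \eqref{eq.V.ms}.

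The main obstacle is the justification of the independence and uniformity assumptions used in the splitting and modeling steps. Strictly speaking, $\delta_0$ and $\delta_1$ are deterministic functions of the low-order part modulo $q_\ell/q_{\ell'}$ of the ciphertext components, so their correlation with $\nu$ and with the secret key $s$ is not literally zero, and the residues are only approximately uniform. I would therefore state the standard modulus-switching heuristic explicitly, arguing that in the parameter regimes of interest these rounding residues decouple from the high-order noise carried by $\nu$ and from $s$. It is worth emphasizing that, in contrast to the multiplication analysis, the contributions of $s$ and $e$ here do not reintroduce the shared-key dependencies that motivate the paper's correction functions, so the plain additivity of variances is appropriate in this case.
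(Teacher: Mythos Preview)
Your proposal is correct and matches the paper's (omitted) argument: the paper states that the result follows from Lemma~\ref{lemma1} by the same reasoning as in Proposition~\ref{prop.enc}, and when unfolded that amounts to precisely your computation---treat $\delta_0,\delta_1$ as fresh uniform polynomials with coefficient variance $t^2(q_\ell/q_{\ell'})^2/12$, split the variance additively, and apply $V_{ab}=nV_aV_b$ to $\delta_1 s$. Your explicit discussion of the uniformity/independence heuristic for the rounding residues is exactly the modeling assumption the paper adopts (see the modulus-switching case in the proof of Lemma~\ref{lemma0} in Appendix~\ref{proof.lemma0}), so nothing is missing.
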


Propositions~\ref{prop.add} and~\ref{prop.ms} follow directly from Lemma~\ref{lemma1} by arguments analogous to those in Proposition~\ref{prop.enc}. The proofs are therefore omitted.

We do not explicitly address key switching here, since its negligible impact is discussed in Appendix \ref{KS.negligibility}. 

\subsection{Homomorphic multiplications}

 The main goal of our analysis is to estimate the error growth caused by homomorphic multiplications, the most critical operations to handle. This requires accounting for the dependencies among noise coefficients induced by such multiplications. Our method relies on a \textit{correction function} $F$, whose construction is detailed in this section.  We recall that the function $F$ was introduced in \cite{BFVvar} to ``correct'' the product of the variances, since these are not independent.  Unlike BFV, the BGV scheme demands particular care, as both terms $s$ and $e$ must be considered, whereas the contribution of $e$ is negligible in BFV. Furthermore, unlike \cite{BFVvar}, our construction of $F$ is not heuristic but is based on \Cref{momento_secondo}. 
 We recall the Isserlis theorem which will be exploited in the proof of the Lemma.

\begin{theorem}\cite[Chapter~8]{loeve} \label{teo:iss}
Let $(X_1, \ldots, X_n)$ be a zero-mean multivariate normal random vector, then
\[\E[X_1 \cdots X_n] = \sum \prod_{i,j} \E[X_i X_j] \,,\]
where the sum is over all the partition of $\{1, \ldots, n\}$ into pairs and the $(i, j)$ ranges in these pairs.
\end{theorem}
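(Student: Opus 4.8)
The plan is to prove the identity by induction on $n$, using Gaussian integration by parts (the multivariate Stein identity) to peel off one variable at a time. First I would dispose of the base and degenerate cases: for $n=1$ both sides vanish (the mean is zero and there is no pairing of a single element), and for $n=2$ the unique pairing gives $\E[X_1 X_2]$, matching the left-hand side. More generally, since the centered normal law is invariant under $X \mapsto -X$, every odd-order moment is zero, which is consistent with the empty collection of pairings; in the induction this is absorbed automatically, because the recursion lowers $n$ by $2$ and bottoms out at the case $n=1$.

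The engine of the induction is the identity
\begin{equation*}
\E[X_1\, g(X)] = \sum_{j=1}^{n} \E[X_1 X_j]\,\E[\partial_{x_j} g(X)],
\end{equation*}
valid for a centered jointly Gaussian $X$ with covariance $\Sigma$ and any smooth $g$ of polynomial growth. I would derive it from the Gaussian density $p$ by noting $\partial_{x_j} p = -(\Sigma^{-1}x)_j\, p$, whence $x_i\, p = -\sum_j \Sigma_{ij}\,\partial_{x_j} p$, and then integrating by parts (the boundary terms vanish because the Gaussian tail dominates the polynomial growth of $g$). Applying this with $g(X) = X_2 X_3 \cdots X_n$, and using $\partial_{x_1} g = 0$ together with $\partial_{x_j} g = \prod_{k \ne 1,j} X_k$ for $j \ge 2$, yields
\begin{equation*}
\E[X_1 X_2 \cdots X_n] = \sum_{j=2}^{n} \E[X_1 X_j]\; \E\!\left[\prod_{k \ne 1,j} X_k\right].
\end{equation*}
By the induction hypothesis the inner expectation equals the sum, over all pairings of the $(n-2)$-element index set $\{1,\dots,n\}\setminus\{1,j\}$, of the associated products of pairwise expectations.

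To close the argument I would verify the combinatorial matching: fixing the partner $j$ of the index $1$ and then pairing the remaining $n-2$ indices realizes every pairing of $\{1,\dots,n\}$ exactly once, since in any pairing the element $1$ has a unique partner. Hence summing $\E[X_1 X_j]$ times the pairings of the remainder over $j$ reproduces $\sum \prod_{i,j} \E[X_i X_j]$ taken over all pairings of the full set, which is precisely the claimed formula. I expect the main obstacle to be this bookkeeping step — making the bijection between ``choice of partner for $1$'' $\times$ ``pairings of the rest'' and ``pairings of the whole set'' fully rigorous — together with the careful justification of the integration-by-parts identity (vanishing boundary terms and the interchange of differentiation and integration). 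A self-contained alternative, which trades the analytic subtlety for a purely combinatorial one, is to expand the characteristic function $\exp(-\tfrac12 t^\top \Sigma t)$ and extract $\E[X_1\cdots X_n] = (-i)^n\, \partial_{t_1}\cdots\partial_{t_n}\,\phi(t)\big|_{t=0}$; there the difficulty migrates to tracking the factors $\tfrac12$, $1/(n/2)!$ and the multinomial counts so that each pairing contributes with coefficient exactly one.
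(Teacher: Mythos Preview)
The paper does not give its own proof of this statement: Theorem~\ref{teo:iss} is quoted as a classical result from Lo\`eve's textbook and invoked as a black box in the proof of Lemma~\ref{momento_secondo}. There is therefore nothing to compare against.

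That said, your argument is a correct and standard proof of Isserlis' (Wick's) theorem. The induction via Gaussian integration by parts (Stein's identity) is one of the two canonical routes, and you carry it out cleanly, including the combinatorial bijection between pairings of $\{1,\dots,n\}$ and pairs (partner of $1$, pairing of the rest). The only technical caveat worth flagging is that your derivation of the Stein identity writes $x_i\, p = -\sum_j \Sigma_{ij}\,\partial_{x_j} p$, which presupposes that $\Sigma$ is invertible so that a density $p$ exists; when $\Sigma$ is singular one handles it by a limiting argument ($\Sigma + \varepsilon I \to \Sigma$) or by restricting to the support subspace, both routine. Your alternative via the characteristic function avoids this issue entirely and is equally standard.
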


\begin{lemma}\label{momento_secondo}
    Let $a(x) = a|_0 +a|_1x+ \dots + a|_{n-1}x^{n-1} \in \mathcal{R}$, where $a|_i$'s are i.i.d. random variables with $\E[a|_i] = 0$,  $\E[a|_i^2] = V_a$. \\ 
Then \begin{equation*}
    \E[a^k|_i]  =0 , \quad \E[(a^k|_i)^2]  = k!n^{k-1}V_a^k \,,
\end{equation*}
for all $0 \leq i \leq n-1$, where $a^k|_i$ denotes the $i$--th coefficient of the polynomial $a(x)^k$, for $n$ sufficiently large.
\end{lemma}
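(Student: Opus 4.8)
The plan is to expand the $i$-th coefficient of $a(x)^k$ as an explicit signed sum over $k$-tuples of coefficient indices, and then to evaluate the two required moments through Isserlis' theorem (\Cref{teo:iss}), applied in the regime where the $a|_i$ are well approximated by i.i.d.\ centered Gaussians. Iterating the product rule \eqref{productpoly}, one writes
\begin{equation*}
    a^k|_i \;=\; \sum_{\mathbf{j}} \epsilon(\mathbf{j})\, a|_{j_1} \cdots a|_{j_k},
\end{equation*}
where $\mathbf{j} = (j_1, \dots, j_k)$ runs over all tuples in $\{0, \dots, n-1\}^k$ with $j_1 + \cdots + j_k \equiv i \pmod{n}$, and $\epsilon(\mathbf{j}) = (-1)^{\lfloor (j_1 + \cdots + j_k)/n \rfloor} \in \{-1, +1\}$ records the sign produced by the reductions $x^n = -1$. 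A counting fact I would use throughout is that the number of admissible tuples is exactly $n^{k-1}$, since fixing $j_1, \dots, j_{k-1}$ freely determines $j_k$ uniquely modulo $n$.

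For the first moment, when $k$ is odd each monomial $a|_{j_1} \cdots a|_{j_k}$ is a product of an odd number of centered factors, so $\E[a^k|_i] = 0$ by \Cref{teo:iss}. When $k$ is even, \Cref{teo:iss} rewrites the expectation of each monomial as a sum over perfect pairings of $\{1, \dots, k\}$, each pairing forcing the paired indices to coincide; I would then exhibit a sign-reversing involution on the surviving configurations, namely the shift $v \mapsto v + n/2 \bmod n$ applied to one free (pair) index. This preserves the constraint $\sum_l j_l \equiv i \pmod n$ but alters $\sum_l j_l$ by $\pm n$, hence flips $\epsilon(\mathbf{j})$; being fixed-point-free, it couples the terms into cancelling pairs and gives $\E[a^k|_i] = 0$.

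For the second moment I would apply \Cref{teo:iss} to the $2k$ factors of $(a^k|_i)^2 = \sum_{\mathbf{j}, \mathbf{j}'} \epsilon(\mathbf{j}) \epsilon(\mathbf{j}')\, a|_{j_1} \cdots a|_{j_k}\, a|_{j'_1} \cdots a|_{j'_k}$, and split the perfect pairings into two families. The \emph{crossing} pairings match the two groups through a bijection $\sigma$ of $\{1, \dots, k\}$, of which there are $k!$; such a pairing forces $\mathbf{j}'$ to be a permutation of $\mathbf{j}$, so that $\sum_l j'_l = \sum_l j_l$ exactly and therefore $\epsilon(\mathbf{j})\epsilon(\mathbf{j}') = +1$. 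Each crossing pairing thus contributes $V_a^k$ summed over the $n^{k-1}$ admissible tuples $\mathbf{j}$, for a total of $k!\, n^{k-1} V_a^k$, which is precisely the claimed value. Every remaining pairing contains at least one \emph{intra-group} pair, and the same involution $v \mapsto v + n/2$ on its free index is fixed-point-free, constraint-preserving and sign-flipping, so these contributions cancel.

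The main obstacle is this last step, i.e.\ controlling the non-crossing contributions. In the exactly Gaussian model the involution argument makes them vanish identically; for the genuinely non-Gaussian distributions used in BGV (discrete Gaussian for $e$, ternary for $s$), \Cref{teo:iss} holds only approximately, and the deviations come solely from $\ge 3$-fold coincidences of indices. Such configurations involve strictly fewer than $k-1$ free indices, so they contribute $O(n^{k-2})$, negligible against the leading term $k!\, n^{k-1} V_a^k$. This asymptotic dominance is exactly what the hypothesis ``for $n$ sufficiently large'' encodes, and the careful bookkeeping of these lower-order terms is the most delicate part of the argument.
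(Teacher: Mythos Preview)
Your proof is correct and takes a genuinely different route from the paper's.

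The paper works via the canonical embedding: it writes $a^k|_0 = \tfrac{1}{n}\sum_m Z_m^k$ with $Z_m = a(\omega_m)$ the evaluations at the roots of $x^n+1$, observes that the $Z_m$ are approximately Gaussian for large $n$ (CLT), and computes the covariances $\E[Z_m Z_l] = nV_a\,[\omega_m\omega_l=1]$. The crucial feature is $\E[Z_m^2]=0$: Isserlis then gives $\E[Z_m^k]=0$ immediately, and in $\E[Z_m^k Z_{m^{-1}}^k]$ every pairing with an intra-group pair dies automatically, leaving only the $k!$ cross pairings, each worth $(nV_a)^k$.

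You instead stay in coefficient space and apply Isserlis directly to the $a|_i$. Here the intra-group pairings do \emph{not} vanish for free, so you must supply the sign-reversing involution $v\mapsto v+n/2$ on a doubled pair-value to cancel them; this is exactly the extra combinatorics that the canonical embedding sidesteps via $\E[Z_m^2]=0$. In exchange your argument is more elementary (no complex roots of unity) and makes the approximation step more explicit: you identify the deviation from the Gaussian model as the $\ge 3$-fold index collisions, which you bound by $O(n^{k-2})$ against the leading $k!\,n^{k-1}V_a^k$. This is arguably a sharper reading of ``$n$ sufficiently large'' than invoking a CLT on $Z_m$. Both arguments rest on the same asymptotic approximation and reach the same conclusion.
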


\begin{proof}
Without loss of generality, we prove the statement for $a^k|_0$, since the same argument applies to all other terms. 
 Let $\omega_1,\dots,\omega_n$ be the roots of $x^n+1$, which is a cyclotomic polynomial since $n$ is a power of two. Then, we know that  $\sum_{i=1}^{n} \omega_i^{m} = 0$ for $1 \le m \le n-1$, and 
    $a(\omega_1)^k + \dots + a(\omega_n)^k = na^k|_0 + a^k|_1(\omega_1+\dots+\omega_n)+ \dots + a^k|_{n-1}(\omega_1^{n-1}+\dots+\omega_n^{n-1})$,
    thus
\begin{equation*}
    a^k|_0 = \frac{1}{n}(a(\omega_1)^k + \dots + a(\omega_n)^k).
\end{equation*}
%
Now, we define 
    $Z_i := a(\omega_i) = a_0+ a_{1}\omega_i +\dots + a_{n-1}\omega_i^{n-1},$ for $i=1, \ldots, n,$
where, for $n$ sufficiently large, thanks to  \cite[Theorem 9]{digiusto2025breaking} we have that $Z_i$ follows a Gaussian distribution centered at zero. Then 
\begin{align*}
    \E[Z_m \cdot Z_l] &= \E[(a|_0+ a|_{1}\omega_m +\dots + a|_{n-1}\omega_m^{n-1})(a|_0+ a|_{1}\omega_l +\dots + a|_{n-1}\omega_l^{n-1})]= \\ &= \E[a|_0^2] + \E[a|_1^2] \omega_m \omega_l +\dots + \E[a|_{n-1}^2]\omega_m^{n-1}\omega_l^{n-1} = \\ 
    &= V_a(1+\omega_m\omega_l + \dots + \omega_m^{n-1}\omega_l^{n-1})= \\ &= \begin{cases}
        nV_a \quad \quad \text{if} \quad \omega_m = \omega_l^{-1} \\ 
        0 \quad \quad \text{otherwise}.
    \end{cases}
\end{align*}
Therefore, for $m=l$,  $\E[Z_m^2] = 0$ and, since every linear combination of $Z_i$'s is normally distributed, we can apply the Isserlis theorem (\Cref{teo:iss}) to obtain $\E[Z_m^k] = 0$. 
Consequently \begin{equation*}
    \E[a^k|_0] = \frac{1}{n}(\E[Z_1^k]+ \dots + \E[Z_n^k]) =0.
\end{equation*}
Moreover, 
   $ \E[a^k|_0^2] = \frac{1}{n^2} \sum_{m}\E[Z_m^kZ_l^k] = \frac{1}{n^2} \sum_{m,l}\E[Z_m^kZ_{m^{-1}}^k] = k!n^{k-1}V_a^k$,
where the last equality follows again from the Isserlis theorem. Indeed, 
\begin{align*}
    \E[Z_m^kZ_{m^{-1}}^k] &= \E[Z_m\cdot \dots \cdot Z_m \cdot Z_{m^{-1}}\cdot \dots \cdot Z_{m^{-1}}] = \\
     &= \E[X_1\cdot \dots \cdot X_m \cdot X_{m+1}\cdot \dots \cdot X_{2m}] = \\ &= \sum \prod \E[X_i X_j] = \\
     &= \begin{cases}
         0 \quad \quad \text{if} \, \, \, 0 \le i,j \le k \, \, \,\text{or} \, \, \, k+1 \le i,j \le 2k+1 \\
         nV_a \quad \quad \text{otherwise}.
     \end{cases}
\end{align*}
Therefore, $\E[Z_m^kZ_{m^{-1}}^k]$ is given by the sum of $k!$ addends whose value is $n^kV_a^k$, i.e., $\E[Z_m^kZ_{m^{-1}}^k] = k! n^k V_a^k$.
Finally, 
$\E[a^k|_0^2] =\frac{1}{n^2}\cdot n\cdot  k! \cdot  n^k \cdot V_a^k =  k!   n^{k-1}  V_a^k$.
\qed
\end{proof}

\begin{obs} 
This result extends some of \cite[Theorem 4.3]{cryptoeprint:2025/1036}, where the authors proved the same statement, using a different approach, but \textit{only} for the case where the $a|_i$'s are normal independent random variables. In our case, the coefficients of the polynomial $a(x)$ follow \textit{any} distribution (with zero mean and finite variance). Moreover, a proof of the specific case $k=2$ is also provided in~\cite[Lemma 2]{DBLP:journals/cic/BossuatCMNT25}.
\end{obs}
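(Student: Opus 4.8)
Since the final statement is a \emph{Remark} comparing \Cref{momento_secondo} to prior work rather than a self-contained theorem, ``proving'' it amounts to justifying the two comparative claims it makes: that our result genuinely generalizes \cite[Theorem 4.3]{cryptoeprint:2025/1036} (from normal coefficients to arbitrary zero-mean, finite-variance coefficients), and that it reduces correctly to \cite[Lemma 2]{DBLP:journals/cic/BossuatCMNT25} when $k=2$. The plan is to argue both by dissecting exactly where distributional hypotheses enter the proof of \Cref{momento_secondo}, rather than by a fresh computation.

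First I would isolate the single place in the proof of \Cref{momento_secondo} where any distributional assumption is actually used. Apart from the purely algebraic identity $a^k|_0 = \frac{1}{n}\sum_{i} a(\omega_i)^k$ and the orthogonality relations $\sum_i \omega_i^m = 0$, the only probabilistic input is that each evaluation $Z_i = a(\omega_i) = \sum_j a|_j\,\omega_i^j$ is (approximately) Gaussian, which is what licenses the application of Isserlis' theorem (\Cref{teo:iss}). Crucially, this Gaussianity is \emph{not} assumed for the raw coefficients $a|_j$; it is derived from \cite[Theorem 9]{digiusto2025breaking}, a central-limit statement asserting that the weighted sum $Z_i$ of many independent terms converges to a centered Gaussian as $n$ grows, for \emph{any} coefficient law with zero mean and finite variance. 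This is precisely the mechanism that decouples the hypothesis on the $a|_j$ from the Gaussianity required by Isserlis, and it is the source of the ``\textit{any} distribution'' generality asserted in the Remark.

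To substantiate the comparison with \cite[Theorem 4.3]{cryptoeprint:2025/1036}, I would point out that their derivation takes the $a|_j$ to be exactly normal, so that each $Z_i$ is an \emph{exact} (not asymptotic) Gaussian, being a linear combination of independent normals; Isserlis then applies verbatim and both moment formulas $\E[a^k|_i]=0$ and $\E[(a^k|_i)^2]=k!\,n^{k-1}V_a^k$ follow. Our proof reaches the identical conclusion but replaces ``exact Gaussian $Z_i$'' with ``asymptotically Gaussian $Z_i$ via the CLT,'' enlarging the admissible class of coefficient distributions while leaving the output formula unchanged, which is exactly the claimed extension. For the $k=2$ specialization I would simply substitute $k=2$ into $\E[(a^k|_i)^2]=k!\,n^{k-1}V_a^k$ to get $\E[(a^2|_i)^2]=2\,n\,V_a^2$, and verify that this matches the statement of \cite[Lemma 2]{DBLP:journals/cic/BossuatCMNT25}, confirming consistency on the overlapping case.

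The main obstacle is that \cite[Theorem 9]{digiusto2025breaking} furnishes only \emph{distributional} (weak) convergence of $Z_i$ to a Gaussian, whereas the proof computes and compares \emph{moments} of the $Z_i$ up to order $2k$. Weak convergence alone does not guarantee convergence of moments, so the delicate point is to argue, via uniform integrability or an explicit bound on the $(2k)$-th moments of $Z_i$ in terms of $V_a$ and $n$, that the two moment identities hold to the stated accuracy for $n$ sufficiently large. Making this ``$\approx$'' precise, i.e.\ quantifying the error as a function of $n$, of the law of $a$, and of $k$, is where the real work lies; by contrast, the combinatorial bookkeeping with the pairing structure of Isserlis' theorem is routine and already carried out in the proof of \Cref{momento_secondo}.
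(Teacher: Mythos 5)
Your proposal is correct and follows essentially the same route as the paper: the Remark is justified precisely by inspecting the proof of \Cref{momento_secondo}, whose only distributional input is the asymptotic Gaussianity of the evaluations $Z_i = a(\omega_i)$ via \cite[Theorem 9]{digiusto2025breaking} (rather than any normality assumption on the coefficients $a|_j$), after which Isserlis' theorem and the $k=2$ specialization proceed exactly as you describe. The obstacle you flag --- that weak convergence of the $Z_i$ does not by itself yield convergence of the order-$2k$ moments --- is a genuine subtlety, but note that it is equally unaddressed in the paper's own proof, which passes from ``$Z_i$ is Gaussian for $n$ sufficiently large'' to the Isserlis moment computation without any uniform-integrability argument.
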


\begin{definition}
    Let $a \in \R$. The correction function $F_a$ is defined as \begin{equation}\label{appF}
    F_a(\iota_1,\iota_2) = \frac{\sum_{i=0}^{n-1} \E[ a^{\iota_1+\iota_2}|_i^2 ] }{\sum_{i_1=0}^{n-1} \E[a^{\iota_1}|_{i_1}^2 ]\sum_{i_2=0}^{n-1} \E[a^{\iota_2}|_{i_2}^2] } \,.
\end{equation}
\end{definition}
From \Cref{momento_secondo}, it follows that for polynomials with coefficients following distributions of that form, it holds that 
\begin{equation}\label{appF2}
    F_a(\iota_1,\iota_2) = \frac{(\iota_1 + \iota_2)! }{\iota_1 !\,\iota_2 !} \,.
\end{equation}
We observe that the value of $F_a$ is the same for all polynomials as in \Cref{momento_secondo}. However, for the sake of clarity in presenting our results, we will keep separate notations, writing 
$F_s$ for the secret key $s$ and $F_e$ for the public key $e$.

\noindent We can now introduce our main theorem. 
\begin{theorem}\label{teorema1}
Let $\nu = \sum_{\iota} a_{\iota}s^{\iota},\nu'= \sum_{\iota} a'_{\iota}s^{\iota}$ be the critical quantities of two independently computed ciphertexts defined with respect to the same modulus $q$. Then 
    \begin{equation*}
    \V((a_{\iota_1}s^{\iota_1}a_{\iota_2}'s^{\iota_2})|_i) \le n \V((a_{\iota_1}s^{\iota_1})|_i)\V((a_{\iota_2}'s^{\iota_2})|_i) F_s(\iota_1,\iota_2) F_e(K_1,K_2),
\end{equation*}
where $K_1,K_2$ represent the highest power of $e$ appearing in  $a_{\iota_1}, a'_{\iota_2}$, respectively.
\end{theorem}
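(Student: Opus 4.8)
The plan is to reorganize the product into a single critical quantity in the combined secret-key power $s^{\iota_1+\iota_2}$, and then reduce the whole statement to \Cref{lemma1} together with the definitions of $F_s$ and $F_e$. First I would substitute $a_{\iota_1}=\sum_{\mu_1}b_{\mu_1}(\iota_1)e^{\mu_1}$ and $a'_{\iota_2}=\sum_{\mu_2}b'_{\mu_2}(\iota_2)e^{\mu_2}$ into the product, grouping terms by the total power of $e$:
$$a_{\iota_1}s^{\iota_1}\,a'_{\iota_2}s^{\iota_2} = \sum_{\mu\ge 0}\tilde b_\mu\, e^{\mu}\,s^{\iota_1+\iota_2}, \qquad \tilde b_\mu := \sum_{\mu_1+\mu_2=\mu} b_{\mu_1}(\iota_1)\,b'_{\mu_2}(\iota_2).$$
This has exactly the shape $\sum_\mu \tilde b_\mu e^\mu s^{\iota}$ handled by \Cref{lemma1}, with the single secret-key exponent $\iota=\iota_1+\iota_2$, so the task becomes verifying that the aggregated coefficients $\tilde b_\mu$ inherit the hypotheses used there.

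Next I would check that the $\tilde b_\mu$ satisfy the properties of \Cref{lemma0}. Zero mean is immediate from \Cref{lemma0}b) and the independence of the two ciphertexts. For pairwise uncorrelatedness I would expand $\Cov(\tilde b_{\mu}|_i,\tilde b_{\mu'}|_i)$ using the coefficient convolution \eqref{productpoly}; since ciphertext $1$ and ciphertext $2$ are independent, every covariance factors into a ciphertext-$1$ expectation times a ciphertext-$2$ expectation, and \Cref{lemma0}a) forces each factor to vanish unless $\mu_1=\mu_1'$ and $\mu_2=\mu_2'$ (and the convolution indices coincide). When $\mu\ne\mu'$ this is impossible, so all cross terms vanish; the same bookkeeping collapses $\V(\tilde b_\mu|_i)$ to its diagonal, giving
$$\V(\tilde b_\mu|_i)=\sum_{\mu_1+\mu_2=\mu}\V\!\big((b_{\mu_1}(\iota_1)\,b'_{\mu_2}(\iota_2))|_i\big)=\sum_{\mu_1+\mu_2=\mu} n\,\V(b_{\mu_1}(\iota_1)|_i)\,\V(b'_{\mu_2}(\iota_2)|_i),$$
where the last step is the product-variance rule for the independent, zero-mean polynomials $b_{\mu_1}(\iota_1)$ and $b'_{\mu_2}(\iota_2)$.

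With \Cref{lemma1} now legitimately applied to the product, the variance becomes $n\big(\sum_j\E[s^{\iota_1+\iota_2}|_j^2]\big)\sum_{\mu_1,\mu_2}\V(b_{\mu_1}(\iota_1)|_i)\V(b'_{\mu_2}(\iota_2)|_i)\sum_k\E[e^{\mu_1+\mu_2}|_k^2]$. I would then invoke the definition \eqref{appF} to replace the combined second moments by correction factors times products, namely $\sum_j\E[s^{\iota_1+\iota_2}|_j^2]=F_s(\iota_1,\iota_2)\big(\sum_{j_1}\E[s^{\iota_1}|_{j_1}^2]\big)\big(\sum_{j_2}\E[s^{\iota_2}|_{j_2}^2]\big)$ and the analogous identity for $e$ with factor $F_e(\mu_1,\mu_2)$. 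Since \Cref{momento_secondo} gives $F_e(\mu_1,\mu_2)=\binom{\mu_1+\mu_2}{\mu_1}$, which is non-decreasing in each argument, I can bound $F_e(\mu_1,\mu_2)\le F_e(K_1,K_2)$ for all $\mu_1\le K_1$, $\mu_2\le K_2$ and pull this constant out of the double sum. This is the single place where equality is lost and the stated inequality enters.

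Finally, the surviving double sum factors as a product of a $\mu_1$-sum and a $\mu_2$-sum, and applying \Cref{lemma1} separately to $a_{\iota_1}s^{\iota_1}$ and to $a'_{\iota_2}s^{\iota_2}$ identifies these two factors (together with their respective $s$-moment sums) precisely as $\V((a_{\iota_1}s^{\iota_1})|_i)$ and $\V((a'_{\iota_2}s^{\iota_2})|_i)$; collecting the constants $n$, $F_s(\iota_1,\iota_2)$, $F_e(K_1,K_2)$ yields the claim. I expect the main obstacle to be the covariance bookkeeping of the second step: one must track the shared randomness ($s$ and the public-key error $e$) against the fresh per-ciphertext randomness carefully enough to confirm that the $\tilde b_\mu$ are genuinely uncorrelated, since this is exactly what allows the product to be treated as a single instance of \Cref{lemma1}. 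A secondary point to state explicitly is the monotonicity of $\binom{\mu_1+\mu_2}{\mu_1}$, which justifies replacing the index-dependent $F_e(\mu_1,\mu_2)$ by the constant $F_e(K_1,K_2)$.
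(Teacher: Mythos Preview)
Your proposal is correct and follows essentially the same route as the paper's proof: both rewrite the product as $\big(\sum_\mu\sum_{\mu_1+\mu_2=\mu} b_{\mu_1}(\iota_1)b'_{\mu_2}(\iota_2)\,e^\mu\big)s^{\iota_1+\iota_2}$, apply \Cref{lemma1}, collapse the inner variance to $n\sum_{\mu_1+\mu_2=\mu}\V(b_{\mu_1}(\iota_1)|_i)\V(b'_{\mu_2}(\iota_2)|_i)$ via independence and \Cref{lemma0}, insert the definitions of $F_s$ and $F_e$, and finally use the monotonicity of $F_e$ to replace $F_e(\mu_1,\mu_2)$ by $F_e(K_1,K_2)$. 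Your presentation is in fact slightly more explicit than the paper's in two places---you spell out why the aggregated $\tilde b_\mu$ inherit the hypotheses of \Cref{lemma0}, and you point to \eqref{appF2} for the monotonicity of $F_e$---whereas the paper simply invokes ``the monotonicity of $F_e$ and $F_s$'' without further comment.
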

\begin{proof}
    From \Cref{lemma1} it is possible to express the variance of two generic terms $a_{\iota_1} s^{\iota_1}|_i$ and $a_{\iota_2}' s^{\iota_2}|_i$ as  \[
\begin{cases}
    \V(a_{\iota_1} s^{\iota_1}|_i) =  \sum_{\mu_1 = 0}^{K_1} \V(b_{\mu_1}(\iota_1)|_i) \sum_{j_1=0}^{n-1} \E[e^{\mu_1}|^2_{j_1}] \sum_{j_2=0}^{n-1} \E[s^{\iota_1}|^2_{j_2}] \\ 
    \V(a_{\iota_2}' s^{\iota_2}|_i) =  \sum_{\mu_2 = 0}^{K_2} \V(b_{\mu_2}'(\iota_2)|_i) \sum_{j_3=0}^{n-1} \E[e^{\mu_2}|^2_{j_3}] \sum_{j_4=0}^{n-1} \E[s^{\iota_2}|^2_{j_4}] 
\end{cases}
\]
By observing that \[
a_{\iota_1} s^{\iota_1} \cdot a_{\iota_2}' s^{\iota_2}= \left(\sum_{\mu}\sum_{\mu_1+\mu_2 = \mu} b_{\mu_1}(\iota_1)b_{\mu_2}'(\iota_2)  e^{\mu}\right) s^{\iota_1+\iota_2},
\]
and using \Cref{lemma1}, it is possible to write the variance $\V((a_{\iota_1} s^{\iota_1} \cdot a_{\iota_2}' s^{\iota_2})|_i)$ as
\begin{align*}
     &\sum_{\mu} \V\left(\sum_{\mu_1+\mu_2 = \mu} (b_{\mu_1}(\iota_1)b_{\mu_2}'(\iota_2))|_{i} \right) \sum_{j_1=0}^{n-1} \E[e^{\mu}|^2_{j_1}] \sum_{j_2=0}^{n-1} \E[s^{\iota_1+\iota_2}|^2_{j_2}] \\
    &{= n \sum_{\mu} \sum_{\mu_1+\mu_2 = \mu} \V\left( b_{\mu_1}(\iota_1)|_i \right) \V \left(b_{\mu_2}'(\iota_2)|_i\right) \sum_{j_1=0}^{n-1} \E[e^{\mu}|^2_{j_1}] \sum_{j_2=0}^{n-1} \E[s^{\iota_1+\iota_2}|^2_{j_2}] },
\end{align*}
where the second equality follows from the independence of  $b_{\mu_1}(\iota_1), b_{\mu_2}'(\iota_2)$ and from {$\Cov(b_{\mu_1}(\iota)|_{j_1},b_{\mu_2}({\iota})|_{j_2})= 0 $ for $\mu_1 \neq \mu_2 \text{ or } j_1 \neq j_2$}.

Moreover, it can be noted that $n\V(a_{\iota_1} s^{\iota_1}|_i)\V(a'_{\iota_2} s^{\iota_2}|_i)$ can be written as
\begin{align*}
      &n\sum_{\mu}\sum_{\mu_1+\mu_2=\mu} \V(b_{\mu_1}(\iota_1)|_i) \V(b'_{\mu_2} (\iota_2)|_i)  \\ & \hspace{0.2cm} \cdot
      \sum_{j_1=0}^{n-1} \E[e^{\mu_1}|^2_{j_1}] \sum_{j_3=0}^{n-1} \E[e^{\mu_2}|^2_{j_3}] \sum_{j_2=0}^{n-1} \E[s^{\iota_1}|^2_{j_2}] \sum_{j_4=0}^{n-1} \E[s^{\iota_2}|^2_{j_4}],
\end{align*}

By \eqref{appF}, we express $\V((a_{\iota_1} s^{\iota_1} \cdot a_{\iota_2}' s^{\iota_2})|_i)$ in terms of $\V(a_{\iota_1} s^{\iota_1}|_i)\V(a'_{\iota_2} s^{\iota_2}|_i)$. In fact, $\V((a_{\iota_1} s^{\iota_1} \cdot a_{\iota_2}' s^{\iota_2})|_i)$ can be written as
\begin{equation*}
     n \sum_{\mu} \sum_{\mu_1+\mu_2 = \mu} \V\left( b_{\mu_1}(\iota_1)|_i \right) \V \left(b'_{\mu_2}(\iota_2)|_i\right)  \sum_{j=0}^{n-1} \E[e^{\mu_1+\mu_2}|^2_{j}] \sum_{j'=0}^{n-1} \E[s^{\iota_1+\iota_2}|^2_{j'}].
\end{equation*}
Thus, by leveraging the properties of the correction functions
\begin{align*}
 n F_s(\iota_1,\iota_2) &\sum_{\mu} \sum_{\mu_1+\mu_2 = \mu} \V\left( b_{\mu_1}(\iota_1)|_i \right) \V \left(b'_{\mu_2}(\iota_2)|_i\right)F_e(\mu_1,\mu_2) \cdot \\  \cdot &\sum_{j_1=0}^{n-1} \E[e^{\mu_1}|^2_{j_1}] \sum_{j_2=0}^{n-1} \E[e^{\mu_2}|^2_{j_2}] \sum_{j_3=0}^{n-1} \E[s^{\iota_1}|^2_{j_3}]\sum_{j_4=0}^{n-1} \E[s^{\iota_2}|^2_{j_4}] \\
 \approx \quad & n F_s(\iota_1,\iota_2) \sum_{\mu} \sum_{\mu_1+\mu_2 = \mu} \V\left( b_{\mu_1}(\iota_1)|_i \right) \V \left(b'_{\mu_2}(\iota_2)|_i\right)F_e(\mu_1,\mu_2) \cdot \\  \cdot &\sum_{j_1=0}^{n-1} \E[e^{\mu_1}|^2_{j_1}] \sum_{j_2=0}^{n-1} \E[e^{\mu_2}|^2_{j_2}] \sum_{j_3=0}^{n-1} \E[s^{\iota_1}|^2_{j_3}]\sum_{j_4=0}^{n-1} \E[s^{\iota_2}|^2_{j_4}].
\end{align*}
Then, by exploiting the monotonicity of \( F_e \) and \( F_s \), it is possible to derive an upper bound given by
\begin{align*}
     n F_s(\iota_1,\iota_2) F_e(K_1,K_2)& \sum_{\mu} \sum_{\mu_1+\mu_2 = \mu} \V\left( b_{\mu_1}(\iota_1)|_i \right) \V \left(b'_{\mu_2}(\iota_2)|_i\right) \cdot \\ \quad \cdot &\sum_{j_1=0}^{n-1} \E[e^{\mu_1}|^2_{j_1}] \sum_{j_2=0}^{n-1} \E[e^{\mu_2}|^2_{j_2}] \sum_{j_3=0}^{n-1} \E[s^{\iota_1}|^2_{j_3}]\sum_{j_4=0}^{n-1} \E[s^{\iota_2}|^2_{j_4}]   ,
\end{align*}
where $K_1,K_2$ represent the highest power of $e$ appearing in  $a_{\iota_1}, a'_{\iota_2}$, respectively. \\ It is straightforward to verify that this concludes our proof, yielding \begin{equation*}
   \V((a_{\iota_1} s^{\iota_1} \cdot a_{\iota_2}' s^{\iota_2})|_i) '\le n  \V(a_{\iota_1} s^{\iota_1}|_i)\V(a'_{\iota_2} s^{\iota_2}|_i)F_s(\iota_1,\iota_2) F_e(K_1,K_2).
\end{equation*}
\qed
\end{proof}
\section{The Shape of Noise: Gaussian Distributions in BGV}\label{sec:BGV_Gaussian}

 In order to provide accurate bounds for the selection of initial parameters, it is necessary to first discuss the distribution of the error coefficients. Our analysis relies on the assumption that these coefficients follow a normal distribution. Under this assumption, tight average-case bounds can be derived, facilitating efficient and reliable parameter selection.
 
However, assuming a Gaussian distribution when this does not hold may result in a significant underestimation of the error bounds, causing computation failures, as highlighted in \cite{cryptoeprint:2025/1036}. It is worth noting that \cite{cryptoeprint:2025/1036} claims that variance-based methods offer no theoretical guarantee on the failure probability. This statement can be misleading, as variance-based analyses remain valid once an appropriate bound on the actual distribution is established. 

Furthermore, we emphasize that the results proven in \cite{cryptoeprint:2025/1036} concern the heavy-tailed nature of the error coefficient distributions in the absence of modulus switching, which is not practical, since BGV is always used with modulus switching technique. As noted in the article itself, applying modulus switching ensures that the dominant terms are effectively distributed according to Gaussian distributions. Therefore, assuming Gaussianity remains reasonable when analyzing BGV parameters, in line with other works such as \cite{costache2023optimisations,murphycentral}.  
Indeed,  as the authors of~\cite{cryptoeprint:2025/1036} themselves note, no evidence of decryption failures has been observed in practical BGV usage scenarios, nor in those of CKKS~\cite{DBLP:journals/cic/BossuatCMNT25}.

This section aims to rigorously justify why the Gaussian assumption for the error coefficients is reasonable and to identify the conditions on the choice of initial parameters that are required to ensure it.
In particular, we will show that the values of the primes $p_\ell$ must satisfy a certain lower bound. This requirement is reasonable: the primes used in the most common libraries already meet this criterion. Moreover, while imposing a minimum size on these primes, our result still allows for improvements and a reduction in the overall modulus size.

\subsection{Gaussian Distribution}

In this section, we provide both theoretical arguments and empirical evidence that the use of modulus switching ensures that the error coefficients are, in practice, Gaussian distributed.\\
We show that the coefficients of the critical quantity can be reasonably assumed to follow a Gaussian distribution at three distinct stages: immediately after the initial encryption, after modulus switching, and after the multiplication of two ciphertexts that have both undergone modulus switching beforehand. Specifically, we provide a formal proof for the first two cases, while for the latter we verify this behavior experimentally. 
\paragraph{Gaussianity of the Fresh Error.}
The critical quantity associated with a fresh ciphertext is given by \begin{equation*}
    m + t(e \cdot u + e_1 \cdot s + e_0),
\end{equation*}
where $e, e_0, e_1 \sim \mathcal{DG}_q(\sigma^2)$ and $m, s, u \sim \mathcal{U}_3$. We are thus interested in the distribution of the coefficients \begin{equation*}
     m|_i + t( \sum_{j=0}^{n-1}e|_j u|_{i-j} + \sum_{j=0}^{n-1}e_1|_js|_{i-j} + e_0|_{i} ), 
\end{equation*}
where $m|_i \longleftarrow \mathcal{U}_t$, $te_0|_{i} \longleftarrow \mathcal{DG}_q(t^2\sigma^2)$ and $t\sum_{j=0}^{n-1}e|_j u|_{i-j}, t\sum_{j=0}^{n-1}e_1|_js|_{i-j} \longleftarrow \mathcal{DG}_q(t^2nV_eV_s)$.
Since the sum of independent Gaussian random variables is still Gaussian with variance equal to the sum of individual variances \cite{feller1950introduction}, the error term can be written as
\begin{equation*}
    m|_i + K|_i,
\end{equation*}
where $m|_i \longleftarrow \mathcal{U}_t$ and 
$K|_i \longleftarrow \mathcal{DG}_q(t^2V_e(2nV_s+1))$.
Therefore, the Gaussianity of the final critical quantity follows from the fact that the first term is negligible compared to the second.
In \Cref{gaussian_fresh}, we present the results for circuits of multiplicative depth 3 (\Cref{fresh_M3_s50_t65537}) and 6 (\Cref{fresh_M6_50k}), focusing on the distribution of the first coefficient of the fresh error.
We used OpenFHE library \cite{OpenFHE} to generate 50,000 error samples, and we analyzed their coefficients using the Python \textbf{fitter} package\footnote{\url{https://fitter.readthedocs.io/en/latest/}}. The code used to generate these samples and to perform the estimates in the following sections is publicly available\footnote{\url{https://github.com/nadirmur/openFHE}}.
As the figures show, the distribution of the fresh error coefficient is well approximated by a Gaussian. In particular, the Kolmogorov–Smirnov test produces a $p$-value $> 0.05$ (ks\textsubscript{pval} in the caption) \cite{sheskin2003handbook}, the Anderson-Darling test statistic (ad\textsubscript{stat}) is below the critical value (ad\textsubscript{crit}) \cite{anderson1954test,stephens2017tests} at the $15\%$ significance level, and the kurtosis is approximately $3$.
The parameters used are: $t = 65537$, $n = 2^{13}$, ciphertext modulus $q$ chosen by the library to ensure at least 128-bit security, $\chi_s = \chi_u = \U_3$, and $\chi_e = \mathcal{DG}( \sigma^2)$ with $\sigma = 3.19$. We employed hybrid key switching and the \texttt{HPSPOVERQ} multiplication method \cite{OpenFHE}.

\begin{figure}[htb]
    \centering
    \begin{subfigure}{.47\textwidth}
         \centering
        \includegraphics[width=\textwidth]{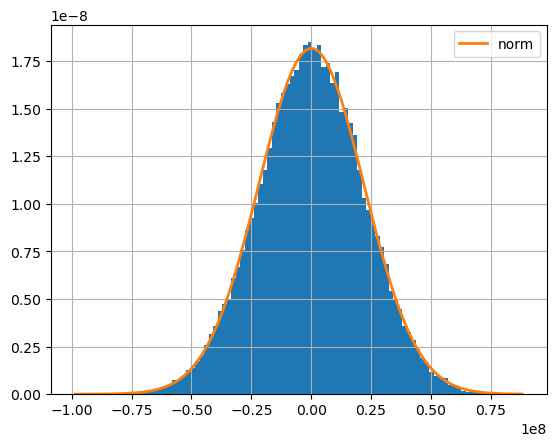}
        \caption{ad\textsubscript{stat}~0.517 $<$ ad\textsubscript{crit} 0.576 ($\alpha=0.15$), ks\textsubscript{pval} 0.896 and kurtosis 2.999. Circuit of multiplicative depth 3. }
        \label{fresh_M3_s50_t65537}
    \end{subfigure}
    \hspace{.2cm}
    \begin{subfigure}{.47\textwidth}
         \centering
        \includegraphics[width=\textwidth]{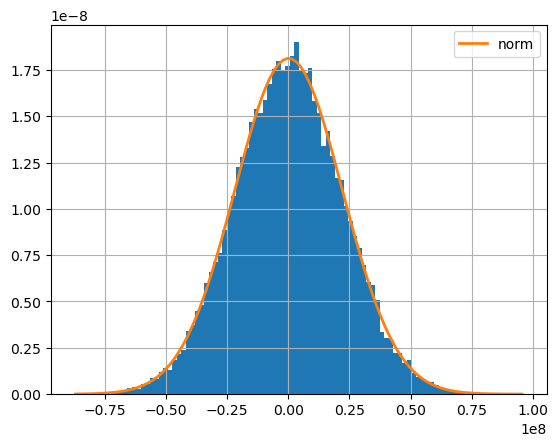}\caption{ad\textsubscript{stat}~0.304 $<$ ad\textsubscript{crit} 0.576 ($\alpha=0.15$), ks\textsubscript{pval} 0.940 and kurtosis 3.00. Circuit of multiplicative depth 6.}
        \label{fresh_M6_50k}
    \end{subfigure}
    \caption{Distribution of the first coefficient of the fresh error}
     \label{gaussian_fresh}
    \medskip
\end{figure}
\paragraph{Gaussianity after Modulus Switching.}
Given $\fc = (\vc, q_\ell,\nu)$ an extended ciphertext, we recall that the variance of the error coefficients after modulo switching to the target modulo $q'_\ell$, where $q_\ell = p_\ell q'_\ell$, is 
\begin{equation}\label{eq.V.ms2}
    V(\nums|_i)=\tfrac{1}{p_\ell^2} (\V(\nu|_i) + V_{\delta_0} + nV_{\delta_1}V_s) \approx \tfrac{1}{p_\ell^2} (\V(\nu|_i) + nV_{\delta_1}V_s).
\end{equation}
as $\V({\delta_0}/{p_\ell}|_i) = t^2/12 \ll \V({\delta_1}s/{p_\ell}|_i) = nt^2V_s/12$.

In our analysis, we will assume that the dominant component in \eqref{eq.V.ms2} is the second term. This assumption is quite common in the study of BGV, as can be seen in \cite{costache2023optimisations}. We will later clarify how our choice of moduli is specifically designed to ensure that this condition is satisfied.

It is precisely this constraint that allows us to conclude the Gaussianity of the error coefficients after modulus switching. By making the first term negligible, the error after modulus switching reduces, according to the previous notation, to
${\delta_1}/{p_\ell}s, $ with ${\delta_1}/{p_\ell}\longleftarrow\mathcal{U}_t$.
Thus, we obtain
\begin{equation*}
    \frac{\delta_1}{p_\ell}s|_i = \sum_{j=0}^{n-1} \xi (i,j) \hspace{0.1cm} \frac{\delta_1}{p_\ell}|_j \hspace{0.1cm} s|_{i-j}.
\end{equation*}
Therefore, we can observe that each coefficient is given by the sum of i.i.d. random variables with finite mean and variance. Since $n$ is sufficiently large, by applying the Central Limit Theorem, we can conclude that \begin{equation*}
     \frac{\delta_1}{p_\ell}s\longleftarrow\mathcal{DG}_q(n\frac{t^2}{12}V_s).
\end{equation*}



\paragraph{Gaussianity after a multiplication.} 

In \Cref{gaussian_ms}, we present the results for circuits of multiplicative depth 3 (\Cref{mul_M3_s50_t65537}) and 4 (\Cref{mul_M4_s50_t65537}), focusing on the distribution of the first coefficient of the error just after the last multiplication. Moreover,  \Cref{3mult_ms_L6,5mult_ms_L6} show the multiplication error after the third and fifth multiplications, respectively, for a circuit of depth 6.
We used OpenFHE library \cite{OpenFHE} (with the same setting as before) to generate 50,000 error samples and analyzed their coefficients using the Python \textbf{fitter} package. As the figures show, the distribution of the error coefficients is well approximated by a Gaussian.

\begin{figure}[htb]
    \centering
    \begin{subfigure}{.47\textwidth}
        \centering
        \includegraphics[width=\textwidth]{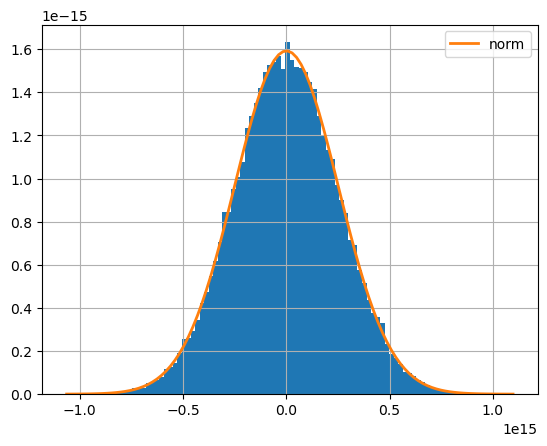}
        \caption{ad\textsubscript{stat}~$0.267<$ ad\textsubscript{crit} $0.576$~($\alpha=0.15$), ks\textsubscript{pval} $0.867$ and kurtosis $2.987$. After the third multiplication.}
        \label{mul_M3_s50_t65537}
    \end{subfigure}
    \hspace{.2cm}
    \begin{subfigure}{.47\textwidth}
        \centering
        \includegraphics[width=\textwidth]{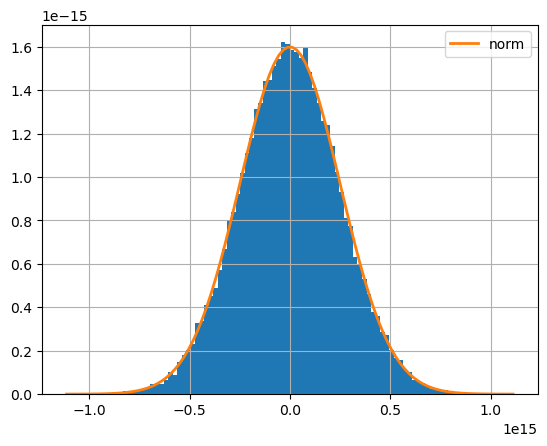}
        \caption{ad\textsubscript{stat}~$0.306<$ ad\textsubscript{crit} $0.576$~($\alpha=0.15$), ks\textsubscript{pval} $0.937$ and kurtosis $3.03$. After the fourth multiplication.}
        \label{mul_M4_s50_t65537}
    \end{subfigure}
%
    \begin{subfigure}{.47\textwidth}
        \centering
        \includegraphics[width=\textwidth]{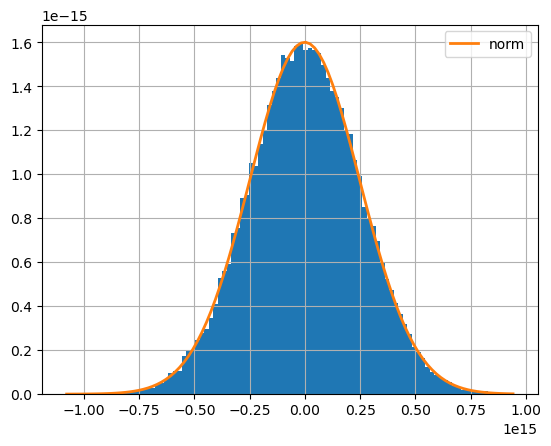}
        \caption{ad\textsubscript{stat}~0.233 $<$ ad\textsubscript{crit} $0.576$ ($\alpha=0.15$), ks\textsubscript{pval} $0.985$ and kurtosis $2.980$. After the 3rd multiplication in a circuit of multiplicative depth 6.}
        \label{3mult_ms_L6}
    \end{subfigure}
    \hspace{.2cm}
    \begin{subfigure}{.47\textwidth}
        \centering
        \includegraphics[width=\textwidth]{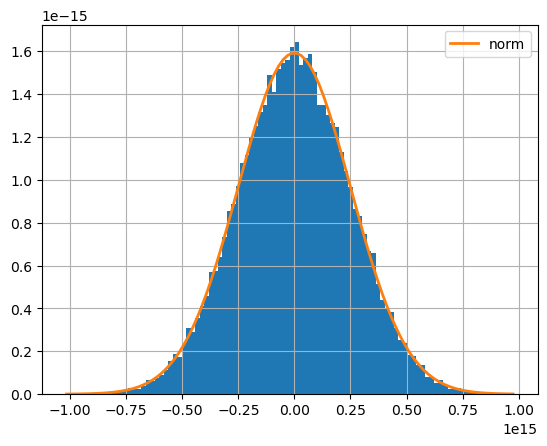}
        \caption{ad\textsubscript{stat}~0.215 $<$ ad\textsubscript{crit} $0.576$ ($\alpha=0.15$), ks\textsubscript{pval} $0.842$ and kurtosis $2.985$. After the 5th multiplication in a circuit of multiplicative depth 6.}
        \label{5mult_ms_L6}
    \end{subfigure}
    \caption{Distribution of the first coefficient just after a multiplication.\\}
    \label{gaussian_ms}
\end{figure}

The fact that the noise coefficients follow a Gaussian distribution is particularly advantageous, as it allows us to bound the maximum absolute value of the noise coefficients with high probability simply by controlling their variance \( V \).
This implies that, to satisfy the correctness condition, a noise vector \( \nu \) must satisfy 
$\| \nu \|_{\infty} < {q}/{2}.$
We can use \Cref{erf} to bound the failure probability
\begin{equation*}
\mathbb{P} \left( \|\nu\|_{\infty} > \frac{q}{2} \right) \leq n \left(1 - \text{erf}\left(\frac{q}{2\sqrt{2V}}\right) \right),
\end{equation*}
where \( V \) denotes the estimated variance of the noise coefficients.

To express this bound more conveniently, we introduce a \textit{security parameter} \( D \) such that
$D \leq {q}/{2\sqrt{2V}}$,
from which it follows, using the monotonicity of the error function, that
\begin{equation*}
\mathbb{P}\left( \|\nu\|_{\infty} > \frac{q}{2} \right) \leq n \left(1 - \text{erf}(D) \right).
\end{equation*}
Thus, by appropriately choosing the security parameter \( D \), we can ensure that the probability of decryption failure remains negligibly small. For instance, setting \( D = 6 \) and \( n = 2^{13} \) yields a failure probability of approximately \( 2^{-42} \).  For practical applications, $D=8$ should be preferred, resulting in a probability of approximately $2^{-83}$, when the ring dimension is $n=2^{13}$.

Consequently, the ciphertext modulus \( q \) can be selected as
\begin{equation}\label{ourqProb}
q \geq 2D \sqrt{2V}.
\end{equation}

It is worth noting that the bounds derived in our analysis provide insight into the \textit{minimum} ciphertext modulus \( q \) required to guarantee the correctness of the scheme, a key factor for optimizing performance and efficiency. 

\subsection{On the Role of Modulus Switching in Preserving Gaussianity}






In this section, we highlight the crucial role of modulus switching in the studied circuits. This mechanism is essential not only to control the noise growth induced by homomorphic operations, but also to \textit{shape} the error distribution. 
Our experiments confirm this behavior: when modulus switching is omitted, the Gaussianity of the coefficients could no longer hold, making it necessary to explicitly characterize their distribution. 
Specifically, \Cref{mul_M3_no_ms} shows the error distribution after three multiplications without modulus switching for $n=2^{13}$, where the tails clearly deviate from Gaussian behavior. 

\begin{figure}[H]
    \centering
    \begin{subfigure}{.46\textwidth}
         \centering
        \includegraphics[width=\textwidth]{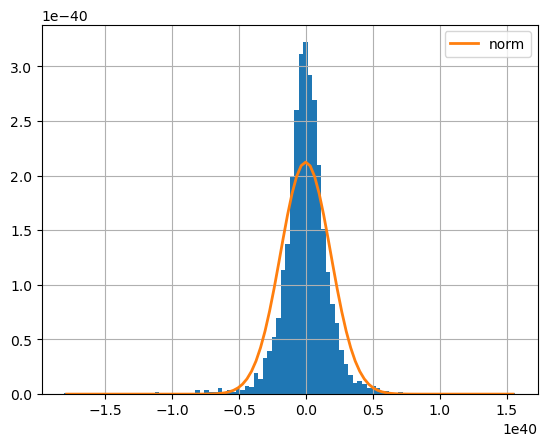}
        \caption{asymmetry $-0.5240$ and kurtosis $14.3769$}
    \end{subfigure}
    \hspace{.2cm}
    \begin{subfigure}{.49\textwidth}
         \centering
        \includegraphics[width=\textwidth]{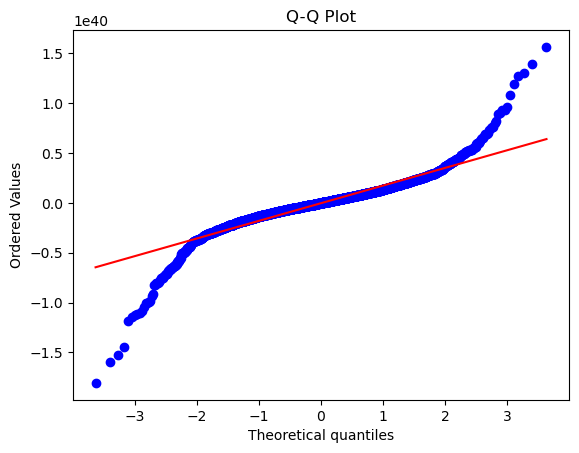}
        \caption{when the distribution is Gaussian, the points lie along the diagonal.}
    \end{subfigure}
    \caption{Error distribution after 3 multiplications without modulus switching.}
     \label{mul_M3_no_ms}
    \medskip
\end{figure}

It is worth noting that this deviation can be observed even without plotting the distribution. Indeed, when the Gaussian bound introduced in the previous section is applied to select the ciphertext modulus \(q\) according to \Cref{ourqProb}, the resulting parameters quickly lead to an unexpectedly high failure rate. For instance, when targeting a decryption failure probability of approximately \(2^{-83}\) (as obtained by setting \(D=8\) in our bound), experiments without modulus switching exhibit failure rates of around 3\%. This discrepancy provides strong evidence that the Gaussian assumption no longer holds and that the tails of the distribution are significantly heavier, in agreement with \cite{cryptoeprint:2025/1036}.

The validity of our approach is therefore directly linked to the use of modulus switching. As previously shown, the error coefficients after encryption, modulus switching, and multiplication preserve their Gaussian distribution. For this property to hold, as highlighted in the previous sections, it is necessary to ensure that the dominant terms after modulus switching and after multiplication are, respectively, $\delta_1s/p_\ell$ and  $(\delta_1/p_\ell \cdot \delta_1'/p_\ell)s^2$. For this reason, in our moduli selection, we explicitly require that 
\begin{equation}\label{gaussian.condition}
   \frac{V_{\ell-1}}{p_{L-\ell}^2} < \alpha \Vms.
\end{equation}
for every level $\ell$, with $\alpha = \sfrac{1}{100}$.
This condition imposes a lower bound on the moduli $p_\ell$, which must be selected accordingly. Nevertheless, it does not compromise the modulus-size reduction achieved by our approach compared to the state-of-the-art. In fact, the primes generated by most existing libraries already satisfy this requirement.

\ec

\section{Variance Estimation in Circuits with Fixed Depth}\label{sec:BGV_circuit}

This section builds on the results of Section~\ref{sec:BGVaverage_teo} and \ref{sec:BGV_Gaussian} to propose a method for tracking the error growth in circuits with a fixed number of operations. As previously noted, obtaining sufficiently tight bounds with respect to the experimental variance of the error coefficients is crucial for identifying initial scheme parameters that simultaneously improve both performance and security.  

We analyze the circuit depicted in \Cref{fig:circuit_mod} in which pairs of ciphertexts are progressively multiplied. Note that we focus on circuits that involve homomorphic multiplication since it is the most complex and, therefore, the most significant operation to study. It is worth noting that the circuit described in this work closely resembles the one adopted by default in OpenFHE \cite{OpenFHE}, where multiplications are always preceded by a modulus switching operation.

%

\begin{figure}[htb] 
  \centering 
  \includegraphics[width=0.9\textwidth]{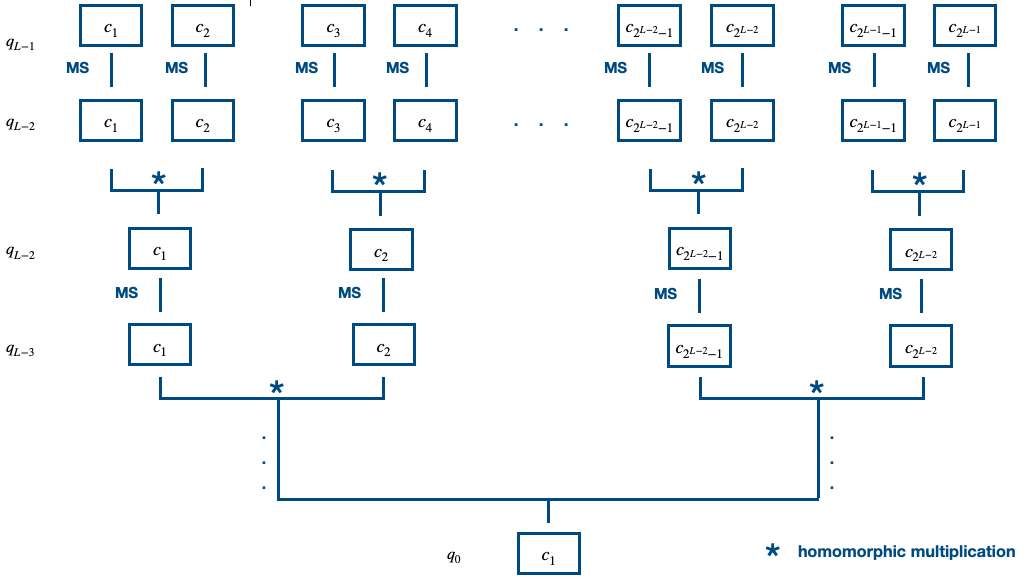} 
  \caption{Reference circuit\\} 
  \label{fig:circuit_mod} 
\end{figure}

Let $L$ denote the number of levels in the circuit, and $\bm{c}_1,\dots,\bm{c}_{2^{L-1}}$ the initial fresh ciphertexts, generated by encrypting $2^{L-1}$ independent and randomly generated messages using the same key. 

 We consider a model where, at each level, homomorphic multiplication of pairs of ciphertexts is carried out. At the beginning of each level $\ell \in \{1,\dots,L-1\}$ the input ciphertexts are switched to a smaller modulus, to maintain the error almost constant throughout the process. We recall that, $q_\ell$'s are the ciphertext moduli of each level $L-1-\ell$, defined as $q_\ell=\prod_{j=0}^\ell p_j$,
where $p_j$ are primes such that $\gcd(p_i,p_j)=1$ for $i \ne j$. 

We denote by  $\nu_{\ell}^{\mathsf{ms}}$ the critical quantity obtained after performing the modulus switching at the beginning of level $\ell$, and by $\nu_\ell$ the critical quantity at the end of level $\ell$, namely just after the multiplication (and key switching technique). Similarly, $V_\ell^{\mathsf{ms}}$ and $V_\ell$ denote the variance of the coefficients of these two critical quantities, respectively. Finally, for each level $\ell$ we will use the notation $\Vms$ to denote $\V(\delta_1s/p_\ell|_i)$. 

The process represented by the circuit can be summarized as follows: \begin{itemize}
    \item At level 0, all the fresh ciphertexts are defined in $\R_{q_{L-1}}^2$.
    \item At level 1, modulo switch to $q_{L-2}$ is applied to each of the initial fresh ciphertexts. After that, the first homomorphic multiplication is performed in $\R_{q_{L-2}}$, yielding $2^{L-2}$ resulting ciphertexts in $\R_{q_{L-2}}^3$. Finally, these ciphertexts are relinearized to obtain the equivalent ones in $\R_{q_{L-2}}^2$, which are the inputs of the next level.  
    \item This process is repeated until level $L-1$, where a final ciphertext, which is the output of the circuit, is returned. 
\end{itemize} 
It should be noted that, in the circuit represented in Fig. \ref{fig:circuit_mod}, the relinearization step is not presented.
In fact, we decided to omit its contribution from our error analysis, since it is negligible compared to the impact of modulus switch and multiplication. A detailed justification for this choice is provided in Appendix~\ref{KS.negligibility}.

\paragraph{Level $0$.}
Let $\bm{c}\in \R_{q_{L-1}}^2$ be a fresh ciphertext, obtained encrypting a random message $m \in \R_t$.  
Let \( \nuenc = a_0 + a_1s \) be the critical quantity associated with $\bm{c}$ and let $V_0$ be the variance of $\nuenc$. Then, we have
\begin{equation*}
\begin{cases}
     V(a_0|_i) = t^2(\frac{1}{12}+ V_e + nV_eV_u) \\ 
     V(a_1s|_i) = t^2nV_eV_s.
\end{cases}
\end{equation*}
Therefore, it is possible to assume 
\begin{equation}\label{V0}
    \V(a_0|_i) \approx \V((a_1s)|_i) \approx \frac{V_0}{2}.
\end{equation}  
It should be noted that the assumption in \eqref{V0} applies to all initial $2^{L-1}$ ciphertexts.  In fact, the estimate of variance $V_\ell$ of the coefficients of the critical quantity at the end of level $\ell$,  is the same for all ciphertexts belonging to the same level.
\paragraph{Level $1$.}
At the beginning of this level, all ciphertexts are subject to a modulus switching from $q_{L-1}$ to $q_{L-2}$. Therefore, the error and the modulus of $\bm{c}$ are rescaled by a factor 
${q_{L-2}}/{q_{L-1}} = {1}/{p_{L-1}}$.
Thus, the critical quantity associated with the resulting ciphertext is  \begin{equation}\label{noiseMod}
   \nu_1^{\mathsf{ms}} =  \frac{\nuenc + \delta_0 + \delta_1s}{p_{L-1}}, \quad \delta_i = t[-c_it^{-1}]_{p_{L-1}}
\end{equation}
Since $\delta_i/p_{L-1} \leftarrow \mathcal{U}_t$, it follows that 
    $\V(\delta_0/p_{L-1}|_i) = t^2/12 $ and $
    \V(\delta_1s/p_{L-1}|_i) ={t^2nV_s}/{12}$.  In BGV, $n$ is typically larger than $2^{12}$, making the contribution of $\delta_0/p_{L-1}$ negligible. 


The critical quantity in \eqref{noiseMod} can be written as
\begin{equation*}
    \nu_1^\mathsf{ms} \approx  \frac{a_0}{p_{L-1}}  + \left( \frac{a_1}{p_{L-1}} + \frac{\delta_1}{p_{L-1}} \right) s,
\end{equation*}
By \Cref{V0} we have
$$\V\left(\frac{a_0}{p_{L-1}}\middle|_i\right) =  \V\left(\frac{a_1}{p_{L-1}}s\middle|_i\right) =  \frac{V_0}{2 p_{L-1}^2}.$$
Therefore, the variance after the modulus switch can be written as 
\begin{equation*}
    V_1^\mathsf{ms} = \frac{V_0}{p_{L-1}^2} + \Vms.
\end{equation*}
As discussed previously, to ensure that the error follows a Gaussian distribution, we rely on \Cref{gaussian.condition}, i.e., we can assume ${V_0}/{p_{L-1}^2} < \Vms/100$.

In order to estimate the variance of the critical quantity resulting from multiplication, we rely on \Cref{teorema1}. 
In particular, we avoid explicitly writing the terms \( b_{\mu}(\iota) \) in the expression \( a_{\iota} = \sum_{\mu = 0}^{K} b_{\mu}(\iota) e^{\mu} \) to keep the notation manageable. However, it is crucial to keep track of the highest power of \( e \), which is \( K \), appearing in the coefficients \( a_{\iota} \) involved in the multiplications, since the correction applied by the function \( F \) depends on this value.

Precisely, given the fresh ciphertexts, it can be seen that for $a_0$ the associated $K$ is $1$, while for $a_1$ we have $K=0$. Instead, for the terms $ \delta_0/p_\ell, \delta_1/p_\ell$ the case is quite different. In fact, these values can be assumed to be randomly distributed over $\mathcal{R}_t$ 
, for every level of the circuit. This makes the analysis slightly more complicated, since, in order to derive tight estimates, it will be necessary to distinguish the contribution of ${a_0,a_1}$ from the one of $\delta_0,\delta_1$.

Now, assume that $\bm{c}, \bm{c}' \in \R_{q_{L-2}}^2$ are two ciphertexts at level 1, after modulus switching has been carried out. Their associated critical quantity is given, respectively, by \begin{equation*}
    \begin{cases}
        \nu_1^\mathsf{ms} = \frac{1}{p_{L-1}}(a_0 + a_1s) +  \frac{\delta_1}{p_{L-1}}s \\
         {\nu}_1^{\mathsf{ms}'}= \frac{1}{p_{L-1}}(a'_0 + a'_1s) +  \frac{\delta'_1}{p_{L-1}}s \\
    \end{cases}
\end{equation*}
Therefore, the critical quantity obtained after their multiplication is of the form  \begin{align*}
  \nu_{1} &=\frac{1}{p_{L-1}^2}a_0a'_0 +\frac{1}{p^2_{L-1}}\left( a'_0 {a_1} + a_0 {a'_1}\right)s + \frac{1}{p_{L-1}}(a'_0\frac{\delta_1}{p_{L-1}} + a_0 \frac{\delta'_1}{p_{L-1}} )  s \hspace{0.15cm} \\ &\hspace{0.4cm}+  \frac{1}{p^2_{L-1}}a_1a'_1s^2 + \frac{1}{p_{L-1}}(a_1\frac{\delta'_1}{p_{L-1}} +a'_1\frac{\delta_1}{p_{L-1}})s^2+  \frac{\delta'_1}{p_{L-1}}\frac{\delta_1}{p_{L-1}}s^2.
\end{align*}
So, we are now able to provide an estimate of the variance $V_1$, applying \Cref{teorema1}, as follows 
\begin{align*}
    V_1  &\le
     \frac{n}{p_{L-1}^4}V(a_0|_i)V(a'_0|_i)F_e(1,1) + \frac{n}{p_{L-1}^4}V(a'_0|_i)V(a_1s|_i) + \frac{n}{p_{L-1}^4}V(a_0|_i)V(a'_1s|_i) \\
     &+ \frac{n}{p_{L-1}^2}V(a'_0|_i)V(\frac{\delta_1}{p_{L-1}}s|_i) + \frac{n}{p_{L-1}^2}V(a_0|_i)V(\frac{\delta'_1}{p_{L-1}}s|_i)\\
      &+ \frac{n}{p_{L-1}^4}V(a_1s|_i)V(a'_1s|_i)F_s(1,1)  + \frac{n}{p_{L-1}^2}V(a_1s|_i)V(\frac{\delta'_1}{p_{L-1}}s|_i)F_s(1,1)  \\
    &+ \frac{n}{p_{L-1}^2}V(a'_1s|_i)V(\frac{\delta_1}{p_{L-1}}s|_i)F_s(1,1)+ {n}V(\frac{\delta_1}{p_{L-1}}s|_i)V(\frac{\delta'_1}{p_{L-1}}s|_i)F_s(1,1).   
\end{align*}
which, recalling that $\Var(a_{\iota}|_i)=\Var(a'_{\iota}|_i) = {V_0}/{2}$ and that $V(\frac{\delta_1}{p_{L-1}}s|_i)=V(\frac{\delta'_1}{p_{L-1}}s|_i)=\Vms$ can be rewritten as
\begin{align*}
    V_1 &\le \frac{n}{p_{L-1}^4}\frac{V_0^2}{4}F_e(1,1) + \frac{2n}{p_{L-1}^4}\frac{V_0^2}{4} + \frac{2n}{p_{L-1}^2}\Vms \frac{V_0}{2} \\ &+ \frac{n}{p_{L-1}^4}\frac{V_0^2}{4}F_s(1,1) + \frac{2n}{p_{L-1}^2}\Vms \frac{V_0}{2} \bcR F_s(1,1)\ec + nF_s(1,1)\Vms^2.
\end{align*}
By observing that the condition in \Cref{gaussian.condition} holds, our bound on $V_1$ is derived as follows \begin{equation*}
    V_1 \approx \left(\frac{3}{2\cdot 100^2} + \frac{3}{100} + 2\right)n\Vms^2 \approx 2n\Vms^2,
\end{equation*}
and we can simplify $\nu_1 = \delta_1 \delta'_1s^2/p_{L-1}$.

We now present the generic level $\ell$. We begin by illustrating the case $\ell = 2$. As we shall see, the input to level 2 corresponds to that of a generic level $\ell \ge 2$, and it produces the same type of output, thus making the analysis carried out for the second level applicable to all subsequent levels.

\paragraph{Level ${2}$.}
At the beginning of level $2$ modulo switch is performed, yielding a critical quantity of the form 
$\nu_2^\mathsf{ms} = {\nu_1}/{p_{L-2}} + {\delta_1}s/{p_{L-2}}$ \,,
with coefficient variance,
\begin{equation*}
    V_2^\mathsf{ms} = \frac{V_1}{p_{L-2}^2} + \Vms \le \left(\frac{1}{100}+1\right)\Vms \,.
\end{equation*} 
Now, the product of two critical quantities $\nu_1^{\mathrm{ms}}$ can be expanded into four terms: the first is the product of two $\nu_1 / p_{L-2}$ terms, the second and third are the products of a  $\nu_1 / p_{L-2}$  term with a  $\nu_1^{\mathrm{ms}}$ term and the fourth is the product of two $\nu_1^{\mathrm{ms}}$ terms. Applying our theorem, the variance after the multiplication is then given by
\begin{equation*}
    V_2 = n\left(\frac{V_1^2}{p_{L-2}^4}F_s(2,2) +2\frac{V_1}{p_{L-2}^2}\Vms F_s(2,1) + 2\Vms^2  \right) ,
\end{equation*}
which can be bounded, for the conditions over $p_{L-2}$, as \begin{equation*}
    V_2 \le \left(\frac{6}{100^2} + \frac{6}{100} + 2\right)n\Vms^2 \approx 2n\Vms^2 \,.
\end{equation*}
Again, the critical quantity at the end of level 2 can be simplified as $\nu_2 = \delta_1 \delta'_1s^2/p_{L-2}$. From this observation, we can deduce the variance estimates for each level $\ell$.

\paragraph{Level $\ell$.}
For any level $\ell \ge 2$, the variance after modulus switching $V_{\ell}^\mathsf{ms}$ and the variance after multiplication $V_{\ell}$ can be approximated as \begin{align}\label{ourVariance}
    &V_{\ell}^\mathsf{ms} \approx \, \frac{101}{100} \Vms \notag\\
    &V_{\ell} \, \, \,\,\approx \,(2 + \epsilon)n\Vms^2 , \quad\epsilon= \frac{6}{100^2} + \frac{6}{100}
\end{align}

\Cref{variance_mp_comparison} presents a comparison of the estimated variances of the error coefficients obtained using our approach, denoted as \textit{our} and the corresponding experimental values, denoted as \textit{exp}. 
 The estimates were obtained using the OpenFHE library, setting the parameters according to the required multiplicative depth, with $t = 65537$. For the experimental values, 50000 samples were computed for $n=2^{13}$, 8000 for $n=2^{14}$ and 5000 for $n=2^{15}$. To enhance readability, variances are presented in terms of their base-2 logarithms.
\begin{table}[htb]
    \centering
    \footnotesize
    \begin{tabular}{ccccccccc}
        \toprule
        & \multicolumn{2}{c}{Encryption} & \multicolumn{2}{c}{ Modulo Switch} & \multicolumn{2}{c}{1 Multiplication} & \multicolumn{2}{c}{6 Multiplications} \\
        \cmidrule(lr){2-3} \cmidrule(lr){4-5} \cmidrule(lr){6-7} \cmidrule(lr){8-9}
        $n$ & \textit{our} & \textit{exp} & \textit{our} & \textit{exp} & \textit{our} & \textit{exp} & \textit{our} & \textit{exp} \\
        \midrule
        $2^{13}$ & 48.76 & 48.76 & 40.84 & 40.83 & 95.68 & 95.65 & 95.71 & 95.25 \\
        $2^{14}$ &  49.76  & 49.76 & 41.84 & 41.79 & 98.68 & 97.62 & 98.71 & 97.63 \\
        $2^{15}$ & 50.76 & 50.72 & 42.84 & 42.82 & 101.68  & 101.62 & 101.71 & 101.59 \\
        \bottomrule
    \end{tabular}
    \vspace{.2cm}
    \caption{Encryption, modulo switch, and multiplication of fresh ciphertexts. The last two columns report the case of six consecutive multiplications as in the reference circuit.}
    \label{variance_mp_comparison}
\end{table}

\begin{obs}
    At first glance, it might seem that, in order to correctly estimate the variance of the multiplication, it is sufficient to apply a factor of 2, arising from the correction of the product between two $s$-terms, thus making it unclear why the dependence on the public key should be considered. It is crucial to emphasize that these estimates are strictly related to a circuit model that we are specifically constructing to guarantee the Gaussianity of the error coefficients according to \Cref{gaussian.condition}. However, if one wanted to accurately study the variance in different circuits, where the term $\Vms$ is not the only significant contribution, the need to account for these dependencies would become even more evident. This necessity reaches its maximum importance when considering circuits in which modulus switching is not applied. Since the use of modulus switching is mandatory for the practical implementation of BGV, in this work we omitted the results obtained from the analysis of such circuits. 
    However, whether modulus switching is not used, or it is applied with primes that do not satisfy \Cref{gaussian.condition}, it remains crucial to take into account all dependencies arising from both $e$ and $s$ as well as from their powers. As an example of the applicability of our approach, if error coefficient distributions that are non-Gaussian, but still suitable for use, were considered, the variance analysis we propose would remain valid and could still provide accurate estimates that never underestimate the experimental values (see \Cref{teorema1}).

\end{obs}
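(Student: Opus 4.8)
The plan is to read this remark not as a single inequality but as a set of claims about \emph{where} each correction factor comes from, and to justify them by tracing the provenance of $F_s$ and $F_e$ through the explicit variance expansions already derived in \Cref{sec:BGV_circuit}. The first step is to make the ``factor of $2$'' precise. Starting from the displayed bound on $V_1$, I would group its summands into three families: (i) products of two $a_\iota$ factors, (ii) mixed products of an $a_\iota$ factor with a freshly injected $\delta_1 s/p_{L-1}$ factor, and (iii) the product of the two $\delta_1 s/p_{L-1}$ factors. Recalling that $a_0$ carries a power of $e$ (so $K=1$) while $a_1$ and the $\delta_1 s/p_{L-1}$ terms carry none ($K=0$), one sees that $F_e(1,1)$ weights \emph{only} the pure $a_0a_0'$ summand of family (i), whereas $F_s(1,1)=2$ weights precisely those summands built from two $s$-carrying factors, culminating in the pure-$\delta$ term $n F_s(1,1)\Vms^2$.

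The second step is to show that condition~\eqref{gaussian.condition} is exactly what collapses this expansion to $2n\Vms^2$. Under $V_0/p_{L-1}^2 < \Vms/100$, the family-(i) terms (including the $F_e$ one) scale like $\Vms^2/100^2$ and the mixed family-(ii) terms like $\Vms^2/100$, so only the pure-$\delta$ term $n F_s(1,1)\Vms^2 = 2n\Vms^2$ survives to leading order. This identifies the ``factor of $2$'' as an artifact of the regime in which $\Vms$ dominates: the $F_e$ correction is not absent, it is merely multiplied by a quantity that~\eqref{gaussian.condition} forces to be negligible. The same grouping carried through the level-$\ell$ recursion (using the simplification $\nu_\ell = \delta_1\delta_1' s^2/p_{L-\ell}$) reproduces the coefficient $2+\epsilon$ of~\eqref{ourVariance}, with the entire $F_e$ contribution absorbed into the small correction $\epsilon$.

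The third step argues the contrapositive for general circuits. If modulus switching is absent there is no rescaling by $1/p_\ell$ and no injection of a fresh $\delta_1 s/p_\ell$ term at each level, so the $a_\iota$ factors carrying growing powers of $e$ are no longer suppressed. By \Cref{lemma1} together with the closed form $F_e(K_1,K_2)=\tfrac{(K_1+K_2)!}{K_1!\,K_2!}$ from \eqref{appF2}, the $F_e$ weights grow combinatorially as the maximal $e$-powers $K_1,K_2$ roughly double with each multiplicative level; hence the $e$-dependent terms become comparable to, and eventually dominate, the $s$-dependent ones, and omitting $F_e$ would produce a genuine underestimate. The final step invokes \Cref{teorema1} itself: its per-product bound already carries both $F_s$ and $F_e$ and holds as an \emph{upper} bound without any appeal to Gaussianity, which establishes the closing assertion that the analysis stays valid and never underestimates even for non-Gaussian but usable error distributions.

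The main obstacle is quantitative rather than conceptual: the remark ranges over several circuit families, but the paper computes only the modulus-switched one in full. A rigorous version of the no-modulus-switching claim would require the explicit level-$\ell$ variance recursion in that setting and a bound certifying that the $F_e$-weighted terms are $\Theta$ of the total --- precisely the analysis the paper deliberately omits as impractical. A secondary subtlety is that several steps in \Cref{sec:BGV_circuit} are $\approx$ rather than $\le$ (for instance the approximations replacing the $\sum_k \E[e^\mu|_k^2]$ factors), so the ``never underestimates'' guarantee should be attributed specifically to the inequality of \Cref{teorema1}, and not to the simplified closed forms of~\eqref{ourVariance}.
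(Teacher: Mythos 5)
Your proposal is correct and follows essentially the same route as the paper: this remark is a discussion whose justification is precisely the level-by-level expansion of $V_1$ in \Cref{sec:BGV_circuit} (where $F_e(1,1)$ multiplies only the $a_0a_0'$ term and is suppressed to order $\Vms^2/100^2$ by \Cref{gaussian.condition}, while $F_s(1,1)=2$ survives in the pure $\delta$-term $nF_s(1,1)\Vms^2$), combined with the fact that \Cref{teorema1} is an upper bound whose derivation does not require Gaussianity of the noise coefficients. Your closing caveats --- that the no-modulus-switching regime is only sketched (the paper deliberately omits that analysis) and that the ``never underestimates'' guarantee rests on the inequality of \Cref{teorema1} rather than on the approximate closed forms in \eqref{ourVariance} --- accurately reflect the paper's own framing.
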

\section{Comparison with Previous Works}
\label{sec:comparison}

When estimating the error in schemes such as BGV, existing approaches can broadly be divided into two main categories: \textit{worst-case} and \textit{average-case}. The former includes analyses based on the Euclidean norm, as in \cite{brakerski2014leveled}, the infinity norm, as in \cite{fan2012somewhat,cryptoeprint:2021/204}, and the canonical norm, as in \cite{costache2016ring,costache2020evaluating,gentry2012homomorphic,iliashenko2019optimisations,mono2022finding}, with the latter providing the tightest bounds among the worst-case analyses. In contrast, average-case approaches model the noise coefficients as random variables and focus on their mean and variance to derive probabilistic bounds on the minimum ciphertext modulus required to guarantee correctness.

Although the latter approaches appear more promising in reducing the ciphertext modulus size, research for a general method to estimate the variance remains incomplete. Most works, such as \cite{murphycentral}, treat the error coefficients as independent, which, as pointed out in \cite{BFVvar,costache2022precision,murphycentral}, leads to \textit{imprecise bounds} and often underestimates the modulus size required to prevent decryption failures.  Consequently, the scheme may be exposed to potential vulnerabilities, including key-recovery attacks \cite{cheon2024attacks}.
Other types of methods, by contrast, avoid such underestimations or failures but are limited to specific settings, as in \cite{costache2023optimisations}, where the authors analyze the error for HElib \cite{helib}.

The objective of our work is to develop an average case approach that takes into account the intrinsic dependencies introduced by the secret and public key, allowing to obtain variance estimates very close to experimental values, without ever underestimating them.
At the same time, our aim is to propose a method independent of the specific library or circuit considered, providing the theoretical basis necessary to build \textit{ad hoc} estimates depending on the circuit and library of interest.
Although, to highlight the effectiveness of our approach, we have focused on circuits that exclusively include multiplications preceded by the modulus switch, our analyses and estimates remain valid for any type of homomorphic circuit.
Our method is in fact intrinsically modular: each homomorphic operation is analyzed independently, producing variance bounds that can be composed to precisely estimate the overall noise growth of a given circuit, assuming only that the multiplications are preceded by a modulus switch -- a completely reasonable hypothesis in the context of BGV typical usage.

In this section, we demonstrate the efficacy of our average-case approach by comparing it to the state-of-the-art works. Specifically, we illustrate how our average-case analysis provides tighter and more practical bounds than traditional worst-case methods \cite{mono2022finding}, and we further compare it with other average-case approaches \cite{murphycentral,costache2023optimisations}. We then show how our choice of ciphertext moduli achieves a substantial size reduction compared to those adopted in widely used libraries such as OpenFHE \cite{OpenFHE} and HElib \cite{helib}.

\paragraph{Canonical norm.}\label{par:can.bounds} 
For the comparison with the worst-case approach, we specifically refer to the estimates proposed in \cite{mono2022finding}.

\begin{table}[htb]
    \centering
    \begin{tabular}{ll}
    \toprule
     Homomorphic operation  & Error bounds with canonical norm \\
    \midrule
    \texttt{Enc} & $||\nuenc||^\text{can} \le  D t \sqrt{n \left(1/12 + 2n  V_e V_s + V_e\right)}$\\
    \texttt{Mod Switch}$(q')$ & $||\nu + \nu_\ms(q')||^\text{can} \underset{\,}{\le} \frac{1}{p_\ell} \ncan\nu + D t \sqrt{n(1/12 + nV_s)} $ \\
    \texttt{Mult}$(\vc,\vc')$ & $||\nu_\text{mul}||^\text{can} \le ||\nu||^\text{can}  ||\nu'||^\text{can}$\\ 
    \bottomrule \\
    \end{tabular}
    \caption{Canonical norm depending on the homomorphic operations.}
    \label{tab:bgvcan}
\end{table}



One of the key distinctions between worst-case and average-case analyses lies not only in how the noise norm is bounded, but also in how these bounds propagate through homomorphic multiplication. 
In the worst-case analysis, the canonical norm is bounded as $\|\nu\|^{\text{can}} \leq D \sqrt{nV}$, 
which holds with probability at least $1 - n e^{-D^2}$, as established in \Cref{canonicalnormbound}. Thus, after one multiplication, we have (by \Cref{tab:bgvcan}) that the bound is $||\nu_\text{mul}||^\text{can} \le ||\nu||^\text{can}  ||\nu'||^\text{can}\leq D^2n\sqrt{VV'}$. Supposing that the ciphertexts are both multiplied just after the modulus switch, the bound becomes $||\nu_\text{mul}||^\text{can} \le D^2n\Vms$.

In contrast, average-case approaches, such as the one proposed in this work, allow significantly tighter bounds by imposing the condition $\|\nu\|_\infty \leq D\sqrt{2V}$, where $V$ denotes the variance of each coefficient of $\nu$.  \ec
According to \Cref{erf}, this bound holds with probability at least $1 - n(1 - \operatorname{erf}(D))$, which, for $D = 6$, exceeds $1 - 2^{-40}$. 
However, in practical scenarios, choosing $D = 8$ is preferable, as it limits the failure probability to $2^{-77}$, when $n \leq 2^{15}$.
 Note that this bound better captures the actual distribution of the error and results in much tighter predictions after multiplications. In fact, by \Cref{ourVariance}, after a multiplication that follows modulo switch, we have $||\nu_\text{mul}||_\infty \leq 2D\sqrt{n}\Vms$.\ec



\ec
\paragraph{Current Average-Case Approaches.}

Within average-case analysis frameworks, the main novelty of our approach lies in tracking the dependencies between the error coefficients, particularly those arising from the multiplication of two ciphertexts. Specifically, given the critical quantities $\nu$ and $\nu'$ of two ciphertexts with coefficient variances $V$ and $V'$, respectively, the current average-case methodology \cite{murphycentral} yields the expression $ \V(\nu \nu'|_i) \approx nVV'$.
Therefore, assuming that two ciphertexts have just undergone modulus switching, the variance resulting from their product is, in accordance with the previously established considerations \begin{equation*}
 \V(\nu \nu'|_i) \approx n\Vms^2 \,.
\end{equation*}

In contrast, our method estimates this value according to \Cref{ourVariance}
, resulting in 
\begin{equation*}
 \V(\nu \nu'|_i) \approx 2n\Vms^2\,.
\end{equation*}
The factor 2, which arises from accounting for the dependencies among the critical quantities, allows for extremely accurate values, as shown in \Cref{variance_mp_comparison}.

From this comparison, it should be evident that accounting for the dependencies in the coefficients of the error polynomial is fundamental for obtaining accurate and correct estimates of the experimental variance.\\ In particular, we believe that this consideration is precisely what overcomes the underestimation inherent in the approach presented in \cite{murphycentral}. 

It is crucial to point out that the factor of 2 is specific to the type of circuit we constructed, in which the variance is independent of the circuit level. However, when analyzing errors in circuits where this condition is not met, it is still possible to derive general upper bounds using \Cref{teorema1}, once again obtaining estimates that never underestimate the error. Nevertheless, for the reasons outlined above, including guaranteeing the Gaussianity of the error, we recommend choosing primes that allow the variance of the error after the modulus switch to be well approximated by $\Vms$, as explained previously.

Our results appear to be very close to those reported in \cite{costache2023optimisations}. In particular, it can be observed that a factor of 2 also appears in \cite[Lemma 9]{costache2023optimisations}.  
\paragraph{Current Libraries.}
Finally we compare the ciphertext modulus $q$ as estimated by our method against those 
adopted by two of the most widely used libraries: OpenFHE \cite{OpenFHE} and HElib \cite{helib}. 
We recall that the ciphertext modulus must be selected to ensure that $\|\nu\| < \frac{q}{2}$. In our selection, we additionally require \eqref{gaussian.condition} to guarantee the Gaussian behavior of the error.

\paragraph{OpenFHE.}

In OpenFHE \cite{OpenFHE}, the generation of the moduli depends on the multiplicative depth selected as input, in a manner similar to the choice of moduli described in this paper.
Moreover, the way multiplications are executed in a fixed-depth circuit of the library is consistent with the circuits analyzed in our work, where each multiplication between two ciphertexts is preceded by a modulus switching operation.
For a circuit of multiplicative depth $M$,  $M + 2 $ moduli are generated, to allow for an additional modulus switching after the final multiplication.
 Therefore, we generate the moduli in a way consistent with the library’s design, to allow a fair comparison between the modulus sizes used in OpenFHE and those proposed in this paper.

In \Cref{tab.q.35}, we compare the ciphertext modulus sizes $q$ for circuits with multiplicative depths 3 (\Cref{tab.q.3}) and 6 (\Cref{tab.q.5}). The values are obtained using our method (denoted as \textit{our}) and those generated by OpenFHE (denoted as \textit{OpenFHE}). Note that, for the generation of our parameters, we fixed $D = 8$.
\vspace{-0.5cm}
\begin{figure}[H]
    \centering
    \begin{subfigure}{.48\textwidth}
        \centering
        \resizebox{\textwidth}{!}{%
        \begin{tabular}{ccccc}
            $n$ & $2^{12}$ & $2^{13}$ & $2^{14}$ & $2^{15}$ \\
            \midrule
            OpenFHE & 147.3 & 151.8 & 156.3 & 161.6 \\
            our &  121.0  & 124.5 & 128.0 & 131.5 
        \end{tabular}}
        \caption{Circuit of depth $3$.}
        \label{tab.q.3}
    \end{subfigure}
    \hspace{0.1cm}
    \begin{subfigure}{.48\textwidth}
        \centering
        \resizebox{\textwidth}{!}{%
        \begin{tabular}{ccccc}\\
            $n$ & $2^{12}$ & $2^{13}$ & $2^{14}$ & $2^{15}$ \\
            \midrule
            OpenFHE & 249.3 & 256.8 & 264.3 & 272.6 \\
            our & 210.3 & 216.8 & 223.3 & 229.8 
        \end{tabular}}
        \caption{Circuit of depth $6$.}
        \label{tab.q.5}
    \end{subfigure}
    \caption{Comparison of $\log_2(q)$ in the circuit shown in \Cref{fig:circuit_mod} (setting  $D = 8$).}
    \label{tab.q.35}
    \medskip
\end{figure}

\vspace{-0.5cm}

From this comparison, it should be evident that our approach can lead to an effective reduction in the modulus size, resulting in corresponding improvements in the overall efficiency of the scheme.

\paragraph{HElib Comparison.}
To provide a more comprehensive overview of the effectiveness of our work and to highlight its applicability across various contexts and libraries, we also present a comparison between our choice of moduli and those used in HElib \cite{halevi2020design}, currently one of the most competitive FHE libraries, together with OpenFHE. We first point out that a strategy for optimizing HElib specific parameters has been analyzed in \cite{costache2023optimisations}. However, a direct comparison with that approach lies beyond the scope of this work. Our focus is instead on emphasizing that the strength of our method lies in its library-independent nature, offering a general framework for accurately estimating noise in different types of circuits and determining the conditions that ciphertext moduli must meet to ensure both efficiency and security (as well as the Gaussianity of the noise distributions, see \Cref{sec:BGV_Gaussian}).
For this reason, we intend to provide in this section just an idea of how our method can potentially improve current HElib parameters as well.

Our method proposes to select parameters based on the multiplicative depth of the circuit, as in OpenFHE and in previous versions of HElib. However, in the new version of the library, the selection of parameters is based on the bit length of the largest modulus in the chain. Therefore, to compare the two approaches, we focus on comparing the ratio between the moduli of successive levels, as also done in \cite{costache2023optimisations}.

Due to the way ciphertext moduli are constructed in HElib, the ratio between two moduli of adjacent levels, i.e. a prime $p_\ell$, is necessarily larger than 36 bits. However, as observed in \cite{costache2023optimisations}, it is typically much larger, often reaching sizes of around 54 bits, for a ring dimension $n \in \{2^{12},2^{13}, 2^{14}, 2^{15}\}$. Instead, with our method, we simply require that \Cref{gaussian.condition} is satisfied, i.e., for $0<l <L-1$ 
\begin{align*}
    p_\ell &> \sqrt{\frac{(2+\epsilon)n\Vms}{\alpha}} \,. 
\end{align*}
\Cref{tab.ratioHElib} reports the bit-size of the ratio between adjacent moduli resulting from our approach, for ring dimensions in $\{ 2^{12},2^{13},2^{14},2^{15} \}$, assuming a secret key with coefficients sampled from a ternary uniform distribution, with hamming weight $h = \sfrac{n}{2}$, plaintext modulus $t= 65537$ and $\alpha =  \sfrac{1}{100}$. As shown in the table, our approach achieves a reduction of up to 6 bits in the ratio between consecutive ciphertext moduli.
\begin{table}[h!]
\centering
\begin{tabular}{ccccc}
\toprule
 $n$ & $2^{12}$ & $2^{13}$ & $ 2^{14}$ & $2^{15}$ \\
\midrule
$\log_2(p_\ell)$ & 29.55 & 30.55 & 31.55  & 32.55 \\
\bottomrule \\
\end{tabular}
\caption{\footnotesize Ratio between adjacent ciphertext moduli for different ring dimensions $n$, according to our approach, represented using their bit size.}
\label{tab.ratioHElib}
\end{table}

\vspace{-1cm}
\section{Conclusions}
\label{sec:BGVconclusions}

In this work, we introduce a new approach to average cases that can accurately estimate the error arising from homomorphic operations, especially multiplication. The main novelty of our method lies in the definition of error bounds that account for the dependencies between the critical quantities of multiplied ciphertexts, arising from both the public and private keys. We believe that this is precisely what enables our approach to overcome the typical underestimations observed in current average-case analyses, thereby providing accurate bounds on the error variance without ever underestimating it.

Furthermore, this work aims to provide general guidelines for studying noise growth in circuits independently of the specific implementation of the homomorphic encryption library employed. 
Based on these estimates, we present a method to select ciphertext moduli appropriately in a generic circuit, demonstrating that the new solution facilitates a significant reduction in their size. This leads to improved efficiency compared to current techniques and compared to the moduli employed in major homomorphic encryption libraries, with no loss of security and proper scheme functioning.

Finally, we provide theoretical and empirical evidence supporting the validity of average-case approaches in the study of homomorphic schemes such as BGV. Precisely, we show that a proper choice of moduli together with application of modulus switching on a systematic basis in BGV allow error distributions to be well approximated by Gaussian distributions. This confirms the validity of such approaches and highlights their practical relevance in improving efficiency, thereby contributing to the potential widespread adoption of these schemes in real-world applications.


\section*{Acknowledgment}
The third and fourth authors were partially supported by the Italian Ministry of
University and Research in the framework of the Call for Proposals for
scrolling of final rankings of the PRIN 2022 call - Protocol no.
2022RFAZCJ. The third author acknowledges project SERICS (PE00000014) under the MUR National Recovery and Resilience Plan funded by the European Union - NextGenerationEU.

\bibliographystyle{splncs04}
\bibliography{citations}

\appendix
\appendix
\label{appendix}

\section{Proof of \Cref{lemma0} }
\label{proof.lemma0}

\begin{description}
\item In order to prove this lemma, we will first demonstrate that these properties hold for the critical quantity of a fresh ciphertext $\nuenc$.\\ Then, we will show that any operation involved in the BGV circuit does not affect these properties.\\
\item[Fresh ciphertexts]
For $\nuenc$, the coefficients $b_{\iota}(\mu)$ are defined as
\begin{equation*}
\begin{cases}
    b_0(0) = m + te_0  \\
    b_1(0) = tu 
\end{cases}
\begin{cases}
    b_0(1) = te_1  
\end{cases}
\end{equation*}  
Therefore, the first property follows immediately from the independence of $b_{\mu_1}(\iota_1)|_{j_1},b_{\mu_2}(\iota_2)|_{j_2}$ when $\mu_1 \neq \mu_2 \text{ or } j_1 \neq j_2$.\\\\
As for the second property, it holds since
 \begin{itemize}
    \item $\E[b_0(0)|_i] = \E[m|_i] + t \E[e_0|_i] = 0$, due to the linearity of the expected value and the distributions considered, i.e. $m \leftarrow \U_t, e_0 \leftarrow \mathcal{DG}_{q}(\sigma^2)$;
     \item $\E[b_1(0)|_i] = t \E[u|_i] = 0$, as $u \leftarrow \chi_s$;
     \item $\E[b_0(1)|_i]= t\E[e_1|_i] = 0 $, as $e_1 \leftarrow  \mathcal{DG}_{q}(\sigma^2)$; \\
\end{itemize}
 We will therefore show that the remaining homomorphic operations do not alter these properties.\\\\ Let $\nu = \sum_{\iota_1} \sum_{\mu_1} b_{\mu_1}(\iota_1) e^{\mu_1} s^{\iota_1}$ , $\nu' = \sum_{\iota_2} \sum_{\mu_2} b'_{\mu_2}(\iota_2) e^{\mu_2} s^{\iota_2} $ be the respective critical quantities of two generic ciphertexts, for which the properties stated above are assumed to hold.\\
\item[Addition of two ciphertexts]  The critical quantity after the addition of the two BGV ciphertexts is given by \begin{equation*}
     \nu_{\text{add}} = \nu + \nu' = \sum_{\iota} \sum_{\mu} b^{\text{add}}_{\mu}(\iota) e^{\mu} s^{\iota} ,
\end{equation*}
where $b^{\text{add}}_{\mu}(\iota) = b_{\mu}(\iota) + b'_{\mu}(\iota)$. \\\\ Therefore, if $\E[ b_{\mu}(\iota)|_i] = \E[ b'_{\mu}(\iota)|_i] = 0$ then \begin{equation*}
    \E[ b^{\text{add}}_{\mu}(\iota)|_i] = 0,
\end{equation*}
according to the linearity of the expected value.\\ \\
Moreover, using the bilinearity of the covariance, we have that, for $\mu_1 \neq \mu_2 \text{ or } j_1 \neq j_2$ \begin{equation*}
    \Cov(b_{\mu_1}^{\text{add}}(\iota_1)|_{j_1},b_{\mu_2}^{\text{add}}(\iota_2)|_{j_2})= 0 ,
\end{equation*}
since \begin{align*}
    \Cov( b_{\mu_1}(\iota_1) + b'_{\mu_1}(\iota_1)|_{j_1},b_{\mu_2}(\iota_2) + b'_{\mu_2}(\iota_2)|_{j_2}) &= \Cov(b_{\mu_1}(\iota_1)|_{j_1},b_{\mu_2}(\iota_2)|_{j_2})  \\  &\quad+  \Cov(b_{\mu_1}(\iota_1)|_{j_1},b'_{\mu_2}(\iota_2)|_{j_2})  \\  &\quad+ \Cov(b'_{\mu_1}(\iota_1)|_{j_1},b_{\mu_2}(\iota_2)|_{j_2})  \\  &\quad+ \Cov(b'_{\mu_1}(\iota_1)|_{j_1},b'_{\mu_2}(\iota_2)|_{j_2}),
\end{align*}
where all the summands vanish because: \begin{itemize}
    \item $\Cov(b_{\mu_1}(\iota_1)|_{j_1},b_{\mu_2}(\iota_2)|_{j_2})=\Cov(b'_{\mu_1}(\iota_1)|_{j_1},b'_{\mu_2}(\iota_2)|_{j_2})=0$ holds by assumption for $\mu_1 \neq \mu_2 $ or $ j_1 \neq j_2$;
    \item $\Cov(b_{\mu_1}(\iota_1)|_{j_1},b'_{\mu_2}(\iota_2)|_{j_2}) =  \Cov(b'_{\mu_1}(\iota_1)|_{j_1},b_{\mu_2}(\iota_2)|_{j_2})=0$ since $b_{\mu_1}(\iota_1)$ and $b'_{\mu_2}(\iota_2)$ are independent $\forall \mu_1, \mu_2$;\\
\end{itemize} 
\item[Multiplication by a constant] Given a ciphertext with $\nu = \sum_{\iota} \sum_{\mu} b_{\mu}(\iota) e^{\mu} s^{\iota}  $ and a constant $\alpha$, the critical quantity obtained after their homomorphic multiplication, according to the BGV scheme, is as follows  \begin{equation*}
     \nu_{\text{const}} = \alpha\nu =\alpha \sum_{\iota} \sum_{\mu} b_{\mu}(\iota) e^{\mu} s^{\iota} = \sum_{\iota} \sum_{\mu} \alpha b_{\mu}(\iota) e^{\mu} s^{\iota}.
\end{equation*}
Therefore, we can define $b^{\text{const}}_{\mu}(\iota) = \alpha b_{\mu}(\iota)$ from which, according to the linearity of the expected value \begin{equation*}
     \E[ b^{\text{const}}_{\mu}(\iota)|_i] =  \E[ \alpha b_{\mu}(\iota)|_i]= \alpha \E[ b_{\mu}(\iota)|_i] = 0,
\end{equation*}
which proves property $b)$.\\ \\
Property $a)$ follows directly by assumption from the bilinearity of the covariance \begin{align*}
     \Cov(b_{\mu_1}^{\text{const}}(\iota_1)|_{j_1},b_{\mu_2}^{\text{const}}(\iota_2)|_{j_2}) &= \Cov(\alpha b_{\mu_1}(\iota_1)|_{j_1},\alpha b_{\mu_2}(\iota_2)|_{j_2}) = \\ &= \alpha^2 \Cov( b_{\mu_1}(\iota_1)|_{j_1}, b_{\mu_2}(\iota_2)|_{j_2}) = 0.
\end{align*}
\item[Multiplication of two ciphertexts]
The critical quantity arising from the multiplication of two ciphertexts, whose associated noise is defined as above, can be expressed as
\begin{equation*}
    \numul = \nu \cdot \nu' =  \sum_{\iota} a^{\mathsf{mul}}_{\iota}s^{\iota} =  \sum_{\iota} \sum_{\mu } b^{\mathsf{mul}}_{\mu}({\iota}) e^{\mu} s^{\iota},
\end{equation*}
where $a^{\mathsf{mul}}_{\iota} = \sum_{\iota_1+\iota_2=\iota}a_{\iota_1} a'_{\iota_2}$. \\\\ Moreover \[
\begin{cases}
    a_{\iota_1} = \sum_{\mu_1} b_{\mu_1}({\iota_1}) e^{\mu_1} \\
    a'_{\iota_2} = \sum_{\mu_2} b'_{\mu_2}({\iota_2}) e^{\mu_2}
\end{cases}
\]
From which it follows that \begin{equation*}
    b^{\mathsf{mul}}_{\mu}({\iota})|_i =  \sum_{{\iota_1}+{\iota_2}=\iota} \sum_{\mu_1+\mu_2=\mu} \sum_j b_{\mu_1}({\iota_1})|_j \, b'_{\mu_2}({\iota_2})|_{i-j}.
\end{equation*} 

From the independence of $b_{\mu_1}(\iota_1)$ and  $b'_{\mu_2}(\iota_2)$ $\forall \mu_1, \mu_2,\iota_1, \iota_2$, and for the linearity of the expected value, one can deduce that \begin{align*}
    \E[b^{\mathsf{mul}}_{\mu}({\iota})|_i] &= \sum_{{\iota_1}+{\iota_2}=\iota} \sum_{\mu_1+\mu_2=\mu} \sum_j\E[b_{\mu_1}({\iota_1})|_j \, b'_{\mu_2}({\iota_2})|_{i-j}] \\ &=\sum_{{\iota_1}+{\iota_2}=\iota} \sum_{\mu_1+\mu_2=\mu} \sum_j\E[b_{\mu_1}({\iota_1})|_j ] \E[b'_{\mu_2}({\iota_2})|_{i-j}] = 0,
\end{align*}
which easily proves property $b)$. \\\\ The expression of the covariance $\Cov(b^{\mathsf{mul}}_{\mu_1}({\iota_1})|_{i_1}, b^{\mathsf{mul}}_{\mu_2}({\iota_2})|_{i_2})$ can be reduced, using its bilinearity, to a sum of terms of the form 
\begin{align*}
      \Cov(b_{\mu_1}({\iota_1})|_{l_1}b'_{\mu_2}({\iota_2})|_{i_1-l_1},b_{\mu_3}({\iota_3})|_{l_2}b'_{\mu_4}({\iota_4})|_{i_2-l_2} ) ,
\end{align*}
which are all zero, using the property of the covariance stated below. \begin{property}\label{cov}
    Let $X_1,X_2,X_3,X_4$ be some fixed random variables.\\ If $X_2,X_4$ are independent with respect to $X_1,X_3$, $\Cov(X_2,X_4) = 0$ and   $\E[X_2] = 0$ then \begin{equation*}
        \Cov( X_1 \cdot X_2, X_3 \cdot X_4 ) = 0.
    \end{equation*} 
\end{property}
\noindent Thus, property $a)$ follows by observing that, for $\mu_2 \neq \mu_4 \text{ or } i_2 \neq i_4$: \begin{itemize}
    \item $\Cov(b'_{\mu_2}({\iota_2})|_{i_1-l_1}, b'_{\mu_4}({\iota_4})|_{i_2-l_2}) = 0$ e $\E[b'_{\mu_2}({\iota_2})|_{i_1-l_1}]=0$ based on the hypotheses made;
    \item $b'_{\mu_2}({\iota_2})|_{i_1-l_1}, b'_{\mu_4}({\iota_4})|_{i_2-l_2}$ are independent with respect to $b_{\mu_1}({\iota_1})|_{l_1}, b_{\mu_3}({\iota_3})|_{l_2}$;
\end{itemize}

Therefore, thanks to the property \ref{cov}, this implies that  \begin{equation*}
    \Cov(b_{\mu_1}({\iota_1})|_{l_1}b'_{\mu_2}({\iota_2})|_{i_1-l_1},b_{\mu_3}({\iota_3})|_{l_2}b'_{\mu_4}({\iota_4})|_{i_2-l_2}) = 0.
\end{equation*}
\item[Modulus and Key Switching]
Let $\bm{c}= (c_0,c_1)$ be a ciphertext in $R_{q_l}\times R_{q_l}$ and suppose that the modulus switch to $q_{l'}$ is applied, in order to reduce the error. 

The resulting ciphertext is defined as \begin{equation*}
    \bm{c'}= \frac{q_{l'}}{q_l}(\bm{c} + \bm{\delta}) \mod q_{l'},
\end{equation*}
where $\bm{\delta} = t[-\bm{c}t^{-1}]_{\frac{q_l}{q_{l'}}}$.

The critical quantity associated to $\bm{c'}$ can be expressed as \begin{equation*}
    \nu' = \frac{q_{l'}}{q_l}(\nu + \nu_{\mathsf{ms}}) \quad \text{ where} \quad \nu_{\mathsf{ms}} = \delta_0 + \delta_1s \,.
\end{equation*}
It is possible to observe that the ciphertext components ${c}_0,c_1 $ can be thought as randomly distributed over $R_{q_l}$, ${c_0},c_1 \leftarrow \mathcal{U}_{q_l}$, and therefore the $\delta_i$ can be treated as independent polynomials with coefficients chosen randomly over $I= {(-\frac{tq_l}{2q_{l'}},\frac{tq_l}{2q_{l'}})} $, i.e. $\delta_0,\delta_1 \leftarrow \mathcal{U}_I$.\\ Moreover, it should be noted that the values $\delta_i$ exclusively influence $b_0(0), b_0(1)$, and that they have an expected value equal to zero, because of their distributions. 

Therefore, referring back to the case of the homomorphic sum, we can deduce that the expected value of \( b'_{\mu}(\iota) \) for the new ciphertext $\bm{c'} $ remains zero, as do the covariances. 

In the same way, by reducing the problem to the case of homomorphic addition, it is possible to show that these properties remain valid also after the relinearization process.

We decided not to report all the technical details but to provide only the key underlying idea, as there are multiple relinearization variants and including them would have required too much space. However, all the calculations can be derived in a very straightforward manner by simply adapting the approach in \cite{BFVvar}.
\end{description}

\section{Proof of \Cref{lemma1} }
\label{proof.lemma1}

In order to prove the statement, we start by writing the term $a_{\iota}s^{\iota}|_i$, according to \ref{productpoly},  as \begin{equation*}
    a_{\iota}s^{\iota}|_i = \sum_{\mu} (b_{\mu}(\iota)e^{\mu}s^{\iota})|_i = \sum_{\mu} \sum_{j=0}^{n-1} \xi(i,j) b_{\mu}(\iota)e^{\mu}|_j s^{\iota}|_{i-j}.
\end{equation*} 
Thus, given two random variables $X$ and $Y$, the following properties hold:
\begin{align*}
    &\text{a. \hspace{0.2cm}} \V(XY) = (\V(X) + \E[X]^2)(\V(Y) + \E[Y]^2) + \Cov(X^2,Y^2) \\
               &\hspace{2.4cm} - (\Cov(X,Y) + \E[X] \E[Y])^2 \\
               &\text{b. \hspace{0.2cm}}  \V(X+Y) = \V(X) + \V(Y) +2\Cov(X,Y),
\end{align*}
where the second property can be generalized for $k$ random variables $\{X_i\}_{i=0}^k$~as 
\begin{equation*}
    \V\left(\sum_{i=0}^k X_i\right) = \sum_{i=0}^k \V(X_i) + \sum_{i_1 \ne i_2} \Cov(X_{i_1},X_{i_2}).
\end{equation*}
Then, it is possible to compute the variance of $a_{\iota}s^{\iota}|_i$ as \begin{align*}
    \V(a_{\iota}s^{\iota}|_i) &= \sum_{\mu} \sum_{j=0}^{n-1} \V(b_{\mu}(\iota)e^{\mu}|_j s^{\iota}|_{i-j})  \\ &+ \sum_{\mu_1 \neq \mu_2 \text{ or } j_1 \neq j_2}  \xi(i,j_1) \xi(i,j_2) \Cov(b_{\mu_1}(\iota)e^{\mu_1}|_{j_1} s^{\iota}|_{i-j_1}, b_{\mu_2}(\iota)e^{\mu_2}|_{j_2} s^{\iota}|_{i-j_2} ),
\end{align*}
where the covariances vanishes based on property \ref{cov}. Thus, the following equality holds
\begin{equation}\label{aiota}
    \V(a_{\iota}s^{\iota}|_i) = \sum_{\mu} \sum_{j=0}^{n-1} \V(b_{\mu}(\iota)e^{\mu}|_j s^{\iota}|_{i-j}).
\end{equation} 
Moreover, according to property (a.), it follows that 
    \begin{align*}
     \V(b_{\mu}(\iota)e^{\mu}|_j s^{\iota}|_{i-j}) &= (\V(b_{\mu}(\iota)e^{\mu}|_j )+ \E[b_{\mu}(\iota)e^{\mu}|_j]^2)(\V(s^{\iota}|_{i-j}) + \E[s^{\iota}|_{i-j}]^2)  \\ &\quad+ \Cov(b_{\mu}(\iota)e^{\mu}|_j^2,s^{\iota}|_{i-j}^2) \\ &\quad- \left(\Cov(b_{\mu}(\iota)e^{\mu}|_j,s^{\iota}|_{i-j}) +\E[b_{\mu}(\iota)e^{\mu}|_j]\E[s^{\iota}|_{i-j}] \right)^2.
\end{align*} 
At this point, it should be noted that \begin{itemize}
    \item $\E[b_{\mu}(\iota)e^{\mu}|_j] = 0$ according to \Cref{lemma0};
    \item $\Cov(b_{\mu}(\iota)e^{\mu}|_j,s^{\iota}|_{i-j}) = \Cov(b_{\mu}(\iota)e^{\mu}|_j^2,s^{\iota}|_{i-j}^2) = 0 $ as $b_{\mu}(\iota)e^{\mu}|_j,s^{\iota}|_{i-j}$ are independent;
\end{itemize}
This results in
$\V(b_{\mu}(\iota)e^{\mu}|_j s^{\iota}|_{i-j}) = \V(b_{\mu}(\iota)e^{\mu}|_j )(\V(s^{\iota}|_{i-j}) + \E[s^{\iota}|_{i-j}]^2)$. 
In addition, for a random variable $X$, it holds that 
$\V(X)= \E[X^2]-\E[X]^2$.
Therefore, 
\begin{equation}\label{bmu}
     \V(b_{\mu}(\iota)e^{\mu}|_j s^{\iota}|_{i-j}) = \V(b_{\mu}(\iota)e^{\mu}|_j )\E[s^{\iota}|_{i-j}^2].
\end{equation}
Finally, the same reasoning can be applied in order to derive \begin{equation}\label{bmusum}
    \V(b_{\mu}(\iota)e^{\mu}|_j ) = \sum_{k=0}^{n-1} \V(b_{\mu}(\iota)|_k)  \E[e^{\mu}|_{j-k}^2].
\end{equation}
In fact, using \eqref{productpoly} and property (b.), it follows that 
\begin{align*}
     \V(b_{\mu}(\iota)e^{\mu}|_j) &= \V\left(\sum_{k=0}^{n-1}b_{\mu}(\iota)|_ke^{\mu}|_{j-k}\right)\\
     &= \sum_{k=0}^{n-1}\V(b_{\mu}(\iota)|_ke^{\mu}|_{j-k}) \\&\hspace{0.4cm}+ \sum_{ k_1 \neq k_2}  \xi(j,k_1) \xi(j,k_2) \Cov(b_{\mu}(\iota)|_{k_1} e^{\mu}|_{j-k_1}, b_{\mu}(\iota)|_{k_2} e^{\mu}|_{j-k_2} ),
\end{align*} 
where the covariances are null thanks to property \ref{cov}.\\ Moreover, according to property (a.), it follows that \begin{align*}
    \V(b_{\mu}(\iota)|_ke^{\mu}|_{j-k}) &= (\V(b_{\mu}(\iota)|_k)+ 
     \E[b_{\mu}(\iota)|_k]^2)(\V(e^{\mu}|_{j-k}) + \E[e^{\mu}|_{j-k}]^2)  \\ &\quad+ \Cov(b_{\mu}(\iota)|_k^2,e^{\mu}|_{j-k}^2) \\ &\quad- (\Cov(b_{\mu}(\iota)|_k,e^{\mu}|_{j-k}) +\E[b_{\mu}(\iota)|_k]\E[e^{\mu}|_{j-k}] )^2.
\end{align*}
Thus, \eqref{bmu} is proven observing that $\E[b_{\mu}(\iota)|_k] = 0$, according to lemma \ref{lemma0}, and that $b_{\mu}(\iota)|_k$ and $e^{\mu}|_{j-k}$ are independent.\\
By substituting \eqref{bmu} and \eqref{bmusum} in \eqref{aiota}, it follows that \begin{align*}
    \V(a_{\iota}s^{\iota}|_i) &= \sum_{\mu} \sum_{j=0}^{n-1} \V(b_{\mu}(\iota)e^{\mu}|_j s^{\iota}|_{i-j}) = \sum_{\mu} \sum_{j=0}^{n-1}  \V(b_{\mu}(\iota)e^{\mu}|_j )\E[s^{\iota}|_{i-j}^2] \\ &= \sum_{\mu} \sum_{j=0}^{n-1}  \sum_{k=0}^{n-1} \V(b_{\mu}(\iota)|_k)  \E[e^{\mu}|_{j-k}^2]\E[s^{\iota}|_{i-j}^2]. \\ 
\end{align*}
Finally, observing that $\V(b_{\mu}(\iota)|_i), \E[e^{\mu}|_{i}^2]$ and $\E[s^{\iota}|_{i}^2]$ do not depend on $i$, the thesis is demonstrated, i.e., \begin{equation*}
    \V(a_{\iota}s^{\iota}|_i) = \sum_{\mu} \V(b_{\mu}(\iota)|_i)  \sum_{k=0}^{n-1}\E[e^{\mu}|_{k}^2]\sum_{j=0}^{n-1}\E[s^{\iota}|_{j}^2].
\end{equation*}

The expression for the variance of the coefficients of the critical quantity simply follows from the observation that the covariance between $a_{\iota_1}s^{\iota_1}|_i$ and $a_{\iota_2}s^{\iota_2}|_i$ vanishes whenever $\iota_1 \ne \iota_2$.  Indeed, by bilinearity of the covariance operator, $\Cov(a_{\iota_1}s^{\iota_1}|_i,a_{\iota_2}s^{\iota_2}|_i)$ can be expressed as a sum of terms of the form \begin{equation*}
    \Cov(b_{\mu_1}({\iota_1})|_j(e^{\mu_1}s^{\iota_1})|_{i-j},b_{\mu_2}({\iota_2})|_j(e^{\mu_2}s^{\iota_2})|_{i-j}) \,,
   \end{equation*}
   each of which vanishes by Property~\ref{cov}. Therefore, we obtain
   \begin{equation*}
       \V(\nu|_i) = \V(\sum_{\iota\ge 0} a_{\iota}s^{\iota}|_i) = \sum_{\iota\ge 0}\V(a_{\iota}s^{\iota}|_i) + \sum_{\iota_1 \ne \iota_2}\Cov(a_{\iota_1}s^{\iota_1}|_i,a_{\iota_2}s^{\iota_2}|_i) \,,
   \end{equation*}
   thus concluding the proof.

\section{On the Negligibility of Key Switching}\label{KS.negligibility}
In this work, we chose to omit and not explicitly address the contribution of key switching in our analysis of noise variance, despite its necessity for ensuring the practicality of evaluated circuits. This decision is justified by the fact that, in general, the parameters associated with key switching can be selected so that its impact on the overall noise remains negligible in comparison to that of other operations — especially the multiplication operation, which immediately precedes the key switching step.

While our approach can easily accommodate this contribution, we opted to exclude it from our presentation for the sake of clarity and due to the aforementioned reason. Various key switching methods exist. To support the reasonableness of our choice, we provide a justification of its negligible impact for one of the main variants: GHS (Gentry Halevi Smart).

Let $q$ be the modulus of the ciphertext to be relinearized, and let $Q$ be a modulus specifically chosen for this purpose, such that  $q < Q$ and $q \mid Q$. \\The core idea of the GHS variant is to perform relinearization in the larger ring $\R_Q$, and then apply modulus switching to return to the smaller ring $\R_q$, reducing the error.\\

Let $c=(c_0,c_1,c_2)$ be the result of the multiplication of two ciphertexts, with $c_0,c_1,c_2 \in \R_q$. \\The key-switching key for the GHS variant is defined as 
\begin{equation*}
    ({ek_0}, {ek_1}) = \left( \left( -a \cdot s + t e + \frac{Q}{q} s^2 \right), a \right) \mod Q \,.
\end{equation*}
The ciphertext resulting from relinearization in $R_Q$ is given by \begin{equation*}
\begin{split}
     {c'_0} = \left[ \frac{Q}{q}c_0+c_2 \cdot ek_0 \right]_Q \,, \quad   {c'_1} = \left[\frac{Q}{q}c_1+c_2 \cdot ek_1 \right]_Q \,,
\end{split}
\end{equation*}
Then, this new ciphertext is scaled back modulo $q$ using the modulus switching technique previously introduced, thereby obtaining \begin{equation*}
\begin{split}
     \hat{c_0} =  \left[\frac{q}{Q}({c'_0} + \delta_0)\right]_q \hspace{1cm} \text{where } \delta_0 = t[-c'_0t^{-1}]_{\frac{Q}{q}}\,, \\
    \hat{c_1} =  \left[\frac{q}{Q}({c'_1} + \delta_1)\right]_q \hspace{1cm} \text{where } \delta_1 = t[-c'_1t^{-1}]_{\frac{Q}{q}} \,,
\end{split}
\end{equation*} 
It is easy to verify that the critical quantity of the new relinearized ciphertext is given by 
\begin{equation*}
\begin{aligned}
   \nu_{\mathsf{ks}} = [\hat{c_0} + \hat{c_1} \cdot s]_q 
    = \numul + \left[ \frac{q}{Q} \left( t c_2 \cdot e + \delta_0 + \delta_1 \cdot s \right) \right]_q \,.
\end{aligned}
\end{equation*}
A typical choice for the size of $Q$ is  $Q \approx q^2$ \cite{cryptoeprint:2021/204}.
In this way, the variance of the second component will be 
\begin{align*}
    V_{\frac{q}{Q}(tc_2\cdot e + \delta_0 + \delta_1 \cdot s)} = \frac{1}{q^2} \left({nt^2V_eV_{c_2}} + { V_{\delta_0} + nV_sV_{\delta_1}} \right)= \frac{t^2}{12} \left({nV_e} + { 1 + nV_s} \right) \,.
\end{align*}
On the other hand, in accordance with the circuits proposed in the paper, we may assume that the ciphertext to be relinearized results from the product of two terms that have already undergone modulus switching.\\ Consequently, the variance of the first component, namely $\numul$, is at least $n\Vms^2 F_s(1,1)$ where $\Vms=\frac{t^2nV_s}{12}$, thereby making evident the negligible contribution of the key switching step.

\end{document}